\newtheorem{theorem}{Theorem}[section]
\newtheorem{lemma}[theorem]{Lemma}
\newtheorem{corollary}[theorem]{Corollary}
\theoremstyle{definition}
\theoremstyle{remark}
\numberwithin{equation}{section}
\newcommand{\al}{\alpha}
\newcommand{\bt}{\beta}
\newcommand{\ba}{\begin{array} }
\newcommand{\ea}{\end{array} }
\newcommand{\be}{\begin{equation} }
\newcommand{\ee}{\end{equation} }
\newcommand{\baa}{\begin{align} }
\newcommand{\eaa}{\end{align} }
\newcommand{\da}{\delta}
\newcommand{\la}{\lambda}
\newcommand{\f}{\displaystyle\frac}
\newcommand{\il}{\displaystyle\int}
\newcommand{\vp}{\varphi}
\newcommand{\ga}{\gamma}
\newcommand{\ra}{\rightarrow}
\newcommand{\og}{\omega}
\newcommand{\sg}{\sigma}
\newcommand{\R}{{\mathbb{R}}}
\newcommand{\pa}{\partial}
\newcommand{\tu}[1]{\textup{#1}}
\newcommand{\iy}{\infty}
\DeclareMathOperator{\ran}{range}\DeclareMathOperator{\cod}{codim}
\DeclareMathOperator{\spa}{span}
\begin{document}
\date{}
\title{\bf\large{ Model of Algal Growth Depending on Nutrients and Inorganic Carbon in a Poorly Mixed Water Column}\thanks{Partially supported by NSFC-11971088, NSFC-11901140, NSFHLJ-LH2019A022, and NSF-DMS-1853598}}
\author{Jimin Zhang$^a$, Junping Shi$^b$\thanks{Corresponding
author. E-mail address: jxshix@wm.edu}, Xiaoyuan Chang$^c$}

\maketitle

\begin{center}
\begin{minipage}{12cm}
\begin{description}
\item \small $a$.~School of Mathematical Sciences, Heilongjiang
University,  Harbin, Heilongjiang, 150080, P.R. China

\item \small $b$.~Department of Mathematics, William
\& Mary, Williamsburg, VA, 23187-8795, USA

\item \small $c$.~School of Science, Harbin University of
Science and Technology,  Harbin, Heilongjiang,
150080, P.R. China

\end{description}
\end{minipage}
\end{center}

\vskip 0.2in

\begin{abstract}
In this paper, we establish a reaction-diffusion-advection partial differential equation model to describe the growth of algae depending on both nutrients and inorganic carbon in a poorly mixed water column. Nutrients from the water bottom and inorganic carbon from the water surface form an asymmetric resource supply mechanism on the algal growth. The existence and stability of semi-trivial steady state and coexistence steady state of the model are proved, and a threshold condition for the regime shift from extinction to survival of algae is established. The influence of environmental parameters on the vertical distribution of algae is investigated in the water column. It is shown that the vertical distribution of algae can exhibit many different profiles under the joint limitation of nutrients and inorganic carbon.
\end{abstract}

\noindent {\bf Keywords and phrases:} Reaction-diffusion-advection model; Effect of nutrients and inorganic carbon; Vertical distribution of algae; Environmental parameters

\noindent {\bf 2010 Mathematics Subject Classifications:} 35J25, 92D25, 35A01

\section{Introduction}
\noindent

Algae are the basis of earth food web and preserve the balance of the global aquatic ecosystem, and they are adaptable and widely distributed in rivers, lakes and oceans. Photosynthesis of algae consumes inorganic carbon to produce organic matter and release oxygen by using light energy. In this procedure, inorganic carbon and light are involved in energy flow and material cycle of the aquatic ecosystem. Nutrient elements, such as phosphorus and nitrogen, are key factors for algal growth and important indicators of water eutrophication. Therefore the growth of algae is supported and restricted  by three essential resources: light, nutrients (i.e. nitrogen and phosphorus) and inorganic carbon (i.e. dissolved CO$_2$, carbonic acid and bicarbonate). Understanding the algae growth in aquatic environment is of fundamental importance to the ecosystem studies. 

Mathematical models have been constructed to study the growth mechanism of algae and its dependence on algae and nutrients, or light, or both of them.
Three different situations have been studied and discussed. First, algae compete only for
nutrients in oligotrophic aquatic ecosystems with ample supply of light \cite{HusWangZhao2013,Nie2015,ShiZhangZhang2019,WangHsuZhao2015,ZhangShiChang2018}; Second, algae compete only for light in eutrophic aquatic ecosystems \cite{DuHsu2010,DuHsuLou2015,Du2011,HusLou2010,JiangLamLouWang2019,MeiZhang2012,PengZhao2010,Zagaris2011}; Third, algae compete for both light and nutrients simultaneously in some
aquatic ecosystems \cite{DuHsu2008a,DuHsu2008b,Huismanv2006,Jager2010,Klausmeier2001,MeiZhang2012DCDSB,Ryabov2010,Vasconcelos2016,Yoshiyama2009,Yoshiyama2002}.

The connection between algae and inorganic carbon is more complicated with a variety of biological mechanisms. Carbon dioxide enters the water by exchange at the water surface, and reacts with water molecules to form carbonic acid and bicarbonate. Algae could take up
dissolve CO$_2$, carbonic acid and bicarbonate by photosynthesis.  An ODE model was constructed in \cite{Van2011} to describe competition for dissolved inorganic carbon in dense algal blooms in a completely well-mixed water column. In \cite{Nie2016} and \cite{HsuLamWang2017}, the authors established PDE models to deal with the interaction between algae and inorganic carbon in a poorly mixed habitat.

The main purpose of this paper is to establish a mathematical model to describe the interaction of algae, nutrients and inorganic carbon in a poorly mixed water column with ample supply of light. The growth of algae only depends on nutrients from the bottom and inorganic carbon from the surface (see Figure \ref{figureshiytu}). The increase of the algal biomass on the water surface inhibits the algal growth on the deep layer since limited inorganic carbon from the surface decreases its supply to the deep layer. On the other hand, an increase of the algal biomass on the water bottom also suppresses the algal growth in the surface layer since limited nutrients from the water bottom decreases its supply to the surface layer. This forms an asymmetric competition for the algae to gain the two resources. It is important and of interest to explore the effect of this asymmetric resource competition for nutrients and inorganic carbon on the algal growth and vertical distribution.

The vertical distribution of algae is highly heterogeneous and exhibits the most prominent vertical aggregation phenomena. For example, algae usually float on the water surface during the daytime and sink to the water bottom at night.
The spatial heterogeneity of algae is generally related to the uneven distribution of essential resources. Previous studies have shown that algae have complex vertical distribution with supply of light and nutrients in poorly mixed water columns \cite{Klausmeier2001,Ryabov2010,Yoshiyama2009,Yoshiyama2002}. In the present paper,
we will reveal that algae also exhibit vertically spatial heterogeneity and vertical aggregation phenomena under the asymmetric resource supply mechanism of nutrients and inorganic carbon, which has not been considered in previous studies.

\begin{figure}[htb]
\centerline{\includegraphics[scale=0.8]{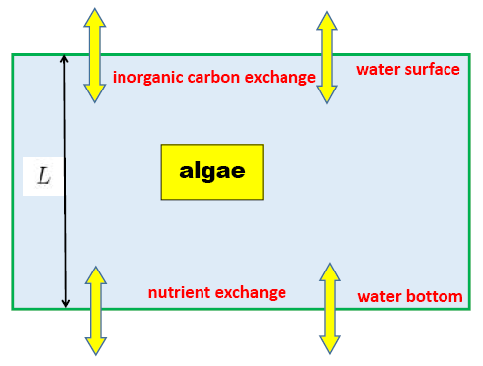}}
\caption{{\footnotesize  Interactions of
algae, nutrients and inorganic carbon in a water column.}}\label{figureshiytu}
\end{figure}

The rest of the paper is organized as follows. In  Section \ref{sectionmodel}, we derive a reaction-diffusion-advection PDE model to describe the growth of algae depending on both nutrients and inorganic carbon in a poorly mixed water column. We then investigate the basic dynamics of this model including the existence and stability of steady states in Section  \ref{sectionsteadystate}. In Section \ref{sectionvertical}, we consider the influence of environmental parameters on the vertical distribution of algae, and explain what mechanisms drive these vertical distributions. Finally, we summarize our findings and state some questions for future study in Section \ref{sectiondiscussion}. Throughout the paper, numerical simulations under reasonable parameter values from literature are presented to illustrate or complement our mathematical findings.

\section{Model construction}\label{sectionmodel}
\noindent

In this section, we establish a mathematical model to describe the interactions of
algae, nutrients  and inorganic carbon in a poorly mixed water column with ample supply of light (see Figure \ref{figureshiytu}). Let $z$ denote the depth coordinate of the water column. We assume that $z=0$ is the surface of the water column,  and $z=L$ is the bottom of the water column. Three partial differential equations are established below to describe the dynamics of biomass density of algae $A$, concentration of dissolved nutrients $N$ (i.e. nitrogen and phosphorus), and concentration of dissolved inorganic carbon $C$ (i.e. dissolved CO$_2$, carbonic acid and bicarbonate). All the variables and parameters of the model and their biological significance are listed in Table \ref{tabletparameters}.
\begin{table}[h]
{\begin{center} \caption[]{ Variables and parameters
of model \eqref{fullequation} with biological meanings.
}\label{tabletparameters}\scriptsize
\begin{tabular}{p{1.2cm}p{5.6cm}p{1.2cm}p{5.6cm}}\hline
{\bf  Symbol} &{\bf  Meaning}&{\bf  Symbol}&{\bf Meaning}\\
\hline$t$&Time&$z$&Depth\\
$A$&Biomass density of algae&$N$& Concentration of dissolved nutrients\\
$C$& Concentration of dissolved
inorganic carbon&$D_a$&
Vertical turbulent diffusivity of algae\\
$D_n$&
Vertical turbulent diffusivity of dissolved nutrients&$D_c$&
Vertical turbulent diffusivity of dissolved
inorganic carbon\\
$s$&
Sinking or buoyant velocity of algae&$r$&
Maximum specific production rate of algae\\
$m$& Loss rate of algae&
$l$&Respiration rate of algae\\
$Q$& Carbon
quota of algae&$\ga_n,\ga_c$&Half saturation constant for nutrient-limited and inorganic carbon-limited production of algae respectively\\
$c_n$&Phosphorus to carbon quota of algae&$p_n$& Proportion of nutrient in algal losses that is recycled\\
$p_c$& The proportion of carbon dioxide released by algae respiration in the
water column&$\al$& Nutrient exchange rate at the bottom of the water column\\
$\bt$&  CO$_2$ gas exchange rate between air and water at the surface of the water column &$N^0$&Concentration of dissolved nutrients at the bottom of the water column\\
$C^0$&Concentration of dissolved inorganic carbon at the surface of the water column
&$L$&  Depth of the water column\\
\hline
\end{tabular}
\end{center}}
\end{table}

Let $A(z,t)$  denote the biomass density of algae at depth $z\in[0,L]$ and time $t$. The growth rate of algae has two limiting factors:
concentration of
dissolved nutrients $N(z,t)$ and concentration of dissolved inorganic carbon $C(z,t)$.
It takes a form of multiplication of two Monod type equations:
\[
f(N)g(C)=\f{N}{\ga_n+N}\f{C}{\ga_c+C}.
\]
Biomass density of algae is lost at density-independent
rate $m$, caused by
death and grazing, and respiration $l/Q$,  where $l$ is the respiration rate
and $Q$ is the carbon quota of algae.  On the other hand, algae may move up or down by turbulence with a depth independent diffusion coefficient
$D_a$. In addition, it also has an active movement to sinking or buoyant with speed $s$ because of photokinesis or seeking resources.
Combining these assumptions gives the following reaction-diffusion-advection equation
with no-flux boundary condition:
\begin{equation}\label{alageequation}
\begin{split}
\f{\pa A(z,t)}{\pa t}&=\text{turbulent diffusion}-\text{sinking} (\text{or buoyant})+\text{growth}-\text{loss},\\
&= D_a\f{\pa^2A}{\pa z^2}-s\f{\pa A}{\pa
z}+rf(N)g(C)A-\left(m+\f{l}{Q}\right)A,~0<z<L,\\
D_aA_z(0,t)&-sA(0,t)=0,~D_aA_z(L,t)-sA(L,t)=0.
\end{split}
\end{equation}

The function $N(z,t)$ describes the concentration of dissolved nutrients in the water column at depth $z\in[0,L]$ and time $t$. The nutrients are supplied from the bottom of the water column with a fixed concentration $N^0$ and nutrient exchange rate $\al$.
The nutrient transport is also governed by passive
movement due to turbulence with a diffusion coefficient $D_n$.  $N(z,t)$ decreases as
it is consumed by algae with consumption rate $c_nrf(N)g(C)A$, and increases  as a result of  recycling from loss of algal biomass with
proportion $p_n\in[0,1]$. The dynamics of $N(z,t)$ is given by
\begin{equation}\label{nutrientequation}
\begin{split}
\f{\pa N(z,t)}{\pa t}&=\text{turbulent diffusion}+\text{recycling}-\text{consumption}\\
&=D_n\f{\pa^2N}{\pa z^2}+p_nc_nmA
-c_nrf(N)g(C)A,~0<z<L,\\
N_z(0,t)&=0,~D_nN_z(L,t)=\al(N^0-N(L,t))~\text{(nutrients
exchange)}.
\end{split}
\end{equation}

Let $C(z,t)$ be the concentration of dissolved
inorganic carbon in the water column at depth $z\in[0,L]$ and time $t$.
Inorganic carbon in the water column mainly comes from the atmosphere, and a very small part comes from the sediment. In order to study the asymmetry of resource supply in the present paper, we assume that inorganic carbon is only supplied from the atmosphere at the surface of the water column with a fixed concentration $C^0$. The change of dissolved  inorganic carbon depends on turbulent diffusion with a diffusion coefficient $D_c$, consumption by algae, recycling from respiration of algae with
proportion $p_c\in[0,1]$, and CO$_2$ gas exchange between air and water with exchange rate $\bt$. The dynamics of $C(z,t)$ is described as
\begin{equation}\label{carbonequation}
\begin{split}
\f{\pa C(z,t)}{\pa t}&=\text{turbulent diffusion}+\text{recycling}-\text{consumption}\\
&=D_c\f{\pa^2C}{\pa z^2}+\f{p_cl}{Q}A
-rf(N)g(C)A,~0<z<L,\\
D_cC_z(0,t)&=\bt(C(0,t)-C^0)~\text{(CO$_2$
exchange)},~C_z(L,t)=0.
\end{split}
\end{equation}

Combining all equations \eqref{alageequation}-\eqref{carbonequation}, we have the following full system of algae-nutrient-inorganic carbon model:
\begin{equation}\label{fullequation}
\begin{cases}
\f{\pa A}{\pa t}= D_a\f{\pa^2A}{\pa z^2}-s\f{\pa A}{\pa
z}+rf(N)g(C)A-\left(m+\f{l}{Q}\right)A,&0<z<L,~t>0,\\
\f{\pa N}{\pa t}=D_n\f{\pa^2N}{\pa z^2}+p_nc_nmA
-c_nrf(N)g(C)A,&0<z<L,~t>0,\\
\f{\pa C}{\pa t}=D_c\f{\pa^2C}{\pa z^2}+\f{p_cl}{Q}A-rf(N)g(C)A,&0<z<L,~t>0,\\
D_aA_z(0,t)-sA(0,t)=0,~D_aA_z(L,t)-sA(L,t)=0,&t>0,\\
N_z(0,t)=0,~D_nN_z(L,t)=\al(N^0-N(L,t)),&t>0,\\
D_cC_z(0,t)=\bt(C(0,t)-C^0),~C_z(L,t)=0,&t>0.
\end{cases}
\end{equation}

Due to the biological meaning of variables in \eqref{fullequation}, we will deal with the solutions of \eqref{fullequation} with nonnegative initial values, i.e.
$$
A(z,0)=A_0(z)\geq\not\equiv0,~N(z,0)=N_0(z)\geq\not\equiv0,~C(z,0)=C_0(z)\geq\not\equiv0.
$$
We also assume that $s\in\R$, $p_n,p_c\in[0,1]$ and all remaining parameters are positive constants unless explicitly stated otherwise.

\section{Existence and stability of steady states}\label{sectionsteadystate}
\noindent

In this section, we explore the existence and stability of steady state solutions of \eqref{fullequation}. The steady state solutions of \eqref{fullequation} satisfy
\begin{equation}\label{steadystateequation}
\begin{cases}
D_aA''(z)-sA'(z)+rf(N(z))g(C(z))A(z)-\left(m+\f{l}{Q}\right)A(z)=0,&0<z<L,\\
D_nN''(z)+p_nc_nmA(z)
-c_nrf(N(z))g(C(z))A(z)=0,&0<z<L,\\
D_cC''(z)+\f{p_cl}{Q}A(z)-rf(N(z))g(C(z))A(z)=0,&0<z<L,\\
D_aA'(0)-sA(0)=D_aA'(L)-sA(L)=0,&\\
N'(0)=0,~D_nN'(L)=\al(N^0-N(L)),&\\
D_cC'(0)=\bt(C(0)-C^0),~C'(L)=0.&
\end{cases}
\end{equation}

To establish the local stability of the steady state solutions, we consider
the following eigenvalue problem
\begin{equation}\label{eqcharacterequation}
\begin{cases}
\la\vp(z)=D_a\vp''(z)-s\vp'(z)+\left[rf(\bar{n})g(\bar{c})-(m+(l/Q))\right]\vp(z)&\\
~~~~~~~~~~~+g(\bar{c})\f{r\ga_n\bar{a}}{(\ga_n+\bar{n})^2}\psi(z)+f(\bar{n})
\f{r\ga_c\bar{a}}{(\ga_c+\bar{c})^2}\phi(z),&0<z<L,\\
\la\psi(z)=D_n\psi''(z)+c_n(p_nm-rf(\bar{n})g(\bar{c}))\vp(z)\\
~~~~~~~~~~~-g(\bar{c})\f{c_nr\ga_n\bar{a}}{(\ga_n+\bar{n})^2}\psi(z)-f(\bar{n})
\f{c_nr\ga_c\bar{a}}{(\ga_c+\bar{c})^2}\phi(z),
&0<z<L,\\
\la\phi(z)=D_c\phi''(z)+((p_cl/Q)-rf(\bar{n})g(\bar{c}))\vp(z)\\
~~~~~~~~~~~-g(\bar{c})\f{r\ga_n\bar{a}}{(\ga_n+\bar{n})^2}\psi(z)
-f(\bar{n})\f{r\ga_c\bar{a}}{(\ga_c+\bar{c})^2}\phi(z),
&0<z<L,\\
D_a\vp'(0)-s\vp(0)=D_a\vp'(L)-s\vp(L)=0, &\\
\psi'(0)=0,~D_n\psi'(L)+\al\psi(L))=0, &\\
-D_c\phi'(0)+\bt\phi(0)=0,~\phi'(L)=0,&
\end{cases}
\end{equation}
where $(\bar{a}(z),\bar{n}(z),\bar{c}(z))$ is a steady state solution
of \eqref{fullequation}. When all
eigenvalues of \eqref{eqcharacterequation} have negative real
part,  $(\bar{a}(z),\bar{n}(z),\bar{c}(z))$ is locally
asymptotically stable, otherwise it is unstable with at least one
eigenvalue having positive real part.

\subsection{Nutrient-inorganic carbon-only semi-trivial steady state}

A nutrient-inorganic carbon-only semi-trivial steady state $E_1:(0,N_1(z),C_1(z))$ satisfies
\begin{equation}\label{NutrientsCarbonequation}
\begin{cases}
N''(z)=0,~N'(0)=0,~D_nN'(L)=\al(N^0-N(L)),&0<z<L,\\
C''(z)=0,~D_cC'(0)=\bt(C(0)-C^0),~C'(L)=0,&0<z<L.
\end{cases}
\end{equation}
We have the following results regarding the existence, uniqueness and local stability of
$E_1$.
\begin{theorem}\label{thNClocal}
\begin{itemize}
\item[\tu{(i)}] The system \eqref{fullequation} has a unique nutrient-inorganic carbon-only steady state solution $E_1\equiv(0,N^0,C^0)$;
\item[\tu{(ii)}]  If
\begin{equation}\label{eqNCcondition}
m+\f{l}{Q}>rf(N^0)g(C^0),
\end{equation}
then $E_1$ is locally asymptotically stable with respect to \eqref{fullequation}, and if
\begin{equation}\label{eqNCconditionunstable}
m+\f{l}{Q}<rf(N^0)g(C^0),
\end{equation}
then $E_1$ is unstable.
\end{itemize}
\end{theorem}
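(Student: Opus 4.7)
Part (i) should be essentially a direct ODE calculation: with $A\equiv 0$, the equation $N''=0$ on $(0,L)$ together with $N'(0)=0$ forces $N$ to be constant, and then the Robin condition $D_nN'(L)=\alpha(N^0-N(L))$ reduces to $\alpha(N^0-N)=0$, giving $N\equiv N^0$. The argument for $C$ is symmetric: $C''=0$ and $C'(L)=0$ yield $C\equiv\text{const}$, and then $D_cC'(0)=\beta(C(0)-C^0)$ forces $C\equiv C^0$. So $(0,N^0,C^0)$ is the unique semi-trivial steady state.

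For part (ii), my plan is to linearise and exploit the triangular structure. Specialising the eigenvalue problem \eqref{eqcharacterequation} to $(\bar a,\bar n,\bar c)=(0,N^0,C^0)$, all the Jacobian entries involving $\bar a$ vanish, so the system decouples partially: the $\varphi$-equation becomes self-contained,
\[
\lambda\varphi = D_a\varphi'' - s\varphi' + \bigl[rf(N^0)g(C^0)-(m+l/Q)\bigr]\varphi,
\]
with $D_a\varphi'(0)-s\varphi(0)=D_a\varphi'(L)-s\varphi(L)=0$, while the $\psi$- and $\phi$-equations only receive $\varphi$ as a source. Consequently, the spectrum of the full linearisation is the union of the spectra of the three diagonal operators, and $E_1$ is stable iff each diagonal block has spectrum in the open left half plane.

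The main step is therefore to identify the principal eigenvalue of the $\varphi$-block. The key observation I would use is that $\varphi_\ast(z)=e^{sz/D_a}$ satisfies \emph{both} no-flux boundary conditions identically, since $D_a\varphi_\ast'-s\varphi_\ast\equiv 0$; substituting into the ODE yields $D_a\varphi_\ast''-s\varphi_\ast'=0$, so $\varphi_\ast$ is a strictly positive eigenfunction with eigenvalue
\[
\lambda_\ast = rf(N^0)g(C^0)-\Bigl(m+\frac{l}{Q}\Bigr).
\]
After recasting the operator in self-adjoint form via the weight $e^{-sz/D_a}$ (equivalently, applying Krein--Rutman to the resolvent), positivity of $\varphi_\ast$ identifies $\lambda_\ast$ as the principal eigenvalue. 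Hence $\lambda_\ast<0$ under \eqref{eqNCcondition} and $\lambda_\ast>0$ under \eqref{eqNCconditionunstable}.

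It remains to verify that the $\psi$- and $\phi$-blocks contribute only eigenvalues with negative real part. For the $\psi$-block, $\lambda\psi=D_n\psi''$ with $\psi'(0)=0$ and $D_n\psi'(L)+\alpha\psi(L)=0$ is a standard self-adjoint Sturm--Liouville problem; $\lambda=0$ is excluded because any constant eigenfunction is killed by the Robin condition at $z=L$, and an energy identity (multiply by $\psi$ and integrate) gives $\lambda\|\psi\|^2=-D_n\|\psi'\|^2-\alpha\psi(L)^2\le 0$ with strict inequality when $\psi\not\equiv 0$. The $\phi$-block is handled identically with the Robin condition moved to $z=0$. Combining the three pieces establishes both stability statements. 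The only subtle point, and the one I expect to require the most care, is the identification of $\lambda_\ast$ as the \emph{principal} eigenvalue rather than merely \emph{an} eigenvalue; the explicit positive eigenfunction $e^{sz/D_a}$ is exactly what lets this go through cleanly.
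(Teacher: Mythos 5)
Your proposal is correct and follows essentially the same route as the paper: exploit the triangular structure of the linearisation at $(0,N^0,C^0)$, identify $rf(N^0)g(C^0)-(m+l/Q)$ as the principal eigenvalue of the $\varphi$-block via the explicit positive eigenfunction $e^{sz/D_a}$ (the paper reaches the same conclusion by substituting $\varphi=e^{(s/D_a)z}\eta$ to reduce to a Neumann problem with constant principal eigenfunction), and verify that the $\psi$- and $\phi$-blocks have negative spectrum. Your energy identity for the Robin blocks is a slightly cleaner substitute for the paper's explicit $\cosh$/$\cos$ computation, but the argument is the same in substance.
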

\begin{proof}
It follows from \eqref{NutrientsCarbonequation} that (i) holds.
From \eqref{eqcharacterequation}, the stability of $E_1$ is determined
by the eigenvalue problem
\begin{subequations}\label{eqNCequation}
\begin{align}
&\la\vp(z)=D_a\vp''(z)-s\vp'(z)+\left[rf(N^0)g(C^0)-(m+(l/Q))\right]\vp(z),~0<z<L,
\label{eqNCequationa}\\
&\la\psi(z)=D_n\psi''(z)+c_n(p_nm-rf(N^0)g(C^0))\vp(z),
~0<z<L,\label{eqNCequationb}\\
&\la\phi(z)=D_c\phi''(z)+((p_cl/Q)-rf(N^0)g(C^0))\vp(z),
~0<z<L,\label{eqNCequationc}\\
&D_a\vp'(0)-s\vp(0)=D_a\vp'(L)-s\vp(L)=0, \label{eqNCequationd}\\
&\psi'(0)=0,~D_n\psi'(L)+\al\psi(L)=0,\label{eqNCequatione}\\
&-D_c\phi'(0)+\bt\phi(0)=0,~\phi'(L)=0.\label{eqNCequationf}
\end{align}
\end{subequations}
Note that the linearized system \eqref{eqNCequation} is partially decoupled. We consider the following two cases:  $\vp\neq0$ and $\vp\equiv0$.

Case 1: $\vp\neq0$. In this case, the stability of $E_1$ is determined by \eqref{eqNCequationa} and its boundary condition \eqref{eqNCequationd}. Let $\vp=e^{(s/D_a)z}\eta$. Then \eqref{eqNCequationa} translates into
\begin{equation}\label{eqNCequationaa}
\begin{cases}
\la\eta(z)=D_a\eta''(z)+s\eta'(z)+\left[rf(N^0)g(C^0)-(m+(l/Q))\right]\eta(z)&0<z<L,\\
\eta'(0)=\eta'(L)=0.&
\end{cases}
\end{equation}
It is not difficult to show that the dominant eigenvalue of \eqref{eqNCequationaa}
is $rf(N^0)g(C^0)-(m+(l/Q))$ and the corresponding
eigenfunction is $\eta(z)\equiv1$. This implies that $\la_1 = rf(N^0)g(C^0)-(m+(l/Q))$ is also an eigenvalue of \eqref{eqNCequation},
the corresponding eigenfunction is $\vp(z) = e^{(s/Du)z}$, and $(\psi,\phi)$ can be solved from
\eqref{eqNCequation}.

Case 2: $\vp\equiv0$. In this case, \eqref{eqNCequationb} and \eqref{eqNCequationc} with their boundary conditions \eqref{eqNCequatione} and \eqref{eqNCequationf} reduce to
\begin{equation}\label{eqNCequationbb}
\la\psi(z)=D_n\psi''(z),~0<z<L,~\psi'(0)=0,~D_n\psi'(L)+\al\psi(L)=0
\end{equation}
and
\begin{equation}\label{eqNCequationcc}
\la\phi(z)=D_c\phi''(z),~0<z<L,~-D_c\phi'(0)+\bt\phi(0)=0,~\phi'(L)=0.
\end{equation}
The eigenvalues of \eqref{eqNCequationbb} must be negative. In fact, if $\la>0$ in \eqref{eqNCequationbb}, then we have $\psi(z)=\cosh(\sigma z)$ for $\sigma=\sqrt{\la/D_n}$ since $\psi'(0)=0$. But from $D_n\psi'(L)+\al\psi(L)$=0, we obtain $D_n\sinh(\sigma L)=-\al\cosh(\sigma L)$, which is a contradiction. It is also easy to see $\la=0$ cannot be an eigenvalue of \eqref{eqNCequationbb}. If $\la<0$, then from $\psi'(0)=0$ we have $\psi(z)=\cos(\sg z)$ for $\sg=\sqrt{-\la/D_n}$. It follows from $D_n\psi'(L)+\al\psi(L)=0$ that $\tan \sg L=\al/(\sg D_n)$. Then the dominant eigenvalue of \eqref{eqNCequationbb} is $\la_1=-D_n\sg_1^2$, where $\sg_1$ is the smallest positive root of $\tan \sg L=\al/(\sg D_n)$. Similarly, the eigenvalues of \eqref{eqNCequationcc} are also negative.

Summarizing above discussions, we conclude that if \eqref{eqNCcondition} holds, then $\la_1<0$ and $E_1$ is locally asymptotically stable. On the other hand, if \eqref{eqNCconditionunstable}
holds, then $\la_1>0$ and $E_1$ is unstable.
\end{proof}

The condition \eqref{eqNCcondition} in Theorem \ref{thNClocal} shows that a large algal loss rate $m$ or respiration rate $l$ leads to the extinction of algae population. We next prove that if the composition of algal loss rate and respiratory rate exceeds its maximum specific production rate, then the extinction of algae population is global for all initial conditions.

\begin{theorem}\label{thNCglobal}
If
\begin{equation}\label{eqNCglobalcondition}
m+\f{l}{Q}>r,
\end{equation}
then $E_1$ is globally asymptotically
stable with respect to \eqref{fullequation} for any nonnegative initial value.
\end{theorem}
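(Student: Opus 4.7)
The plan is to establish convergence in cascade: first show that the algal biomass $A(\cdot, t)$ decays to zero uniformly on $[0, L]$, and then exploit this decay to drive $N$ and $C$ to their resource-only equilibria $N^0$ and $C^0$.

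For the first stage, I will use the elementary bound $0 \le f(N) g(C) \le 1$ valid for every $N, C \ge 0$. Hence $A$ is a nonnegative subsolution of the linear reaction-diffusion-advection problem
\begin{equation*}
\tilde A_t = D_a \tilde A_{zz} - s \tilde A_z + \left(r - m - \frac{l}{Q}\right) \tilde A, \quad 0 < z < L,
\end{equation*}
equipped with the no-flux boundary conditions $D_a \tilde A_z - s \tilde A = 0$ at $z = 0, L$ and initial datum $\tilde A(z, 0) = A_0(z)$. The principal eigenvalue of the associated elliptic operator can be read off from Case 1 of the proof of Theorem \ref{thNClocal}: it equals $r - (m + l/Q)$, with positive eigenfunction $e^{(s/D_a) z}$. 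Under \eqref{eqNCglobalcondition} this principal eigenvalue is strictly negative, so $\tilde A(\cdot, t) \to 0$ exponentially in $L^\infty(0, L)$, and the parabolic comparison principle then gives $0 \le A(z, t) \le \tilde A(z, t) \to 0$ uniformly in $z$.

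For the second stage, I would first establish uniform $L^\infty$ a priori bounds on $N$ and $C$ on $[0, L] \times [0, \infty)$: the production terms $p_n c_n m A$ and $(p_c l/Q) A$ are uniformly bounded since $A$ is, the consumption terms are nonpositive, and the Robin-type boundary conditions drive $N$ toward $N^0$ and $C$ toward $C^0$, so a standard maximum principle argument produces the bound. Once $A \to 0$ uniformly, the full forcing terms in the $N$ and $C$ equations decay to zero uniformly as well. Setting $W = N - N^0$ and $V = C - C^0$ reduces these equations to linear heat equations with homogeneous Robin boundary conditions and vanishing forcing; since the principal eigenvalues of the unforced linear problems are strictly negative (Case 2 of the proof of Theorem \ref{thNClocal}), Duhamel's formula together with the theory of asymptotically autonomous parabolic systems yields $W, V \to 0$ uniformly on $[0, L]$.

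The main technical obstacle is the interchange of limits in the convolution representation $\int_0^t e^{(t - \tau) \mathcal{L}} h(\cdot, \tau) \, d\tau$ that arises in Duhamel's formula for $W$ and $V$, where $h$ denotes the decaying forcing. I would handle this by splitting the integral at an intermediate time $T$: on $[T, t]$ the smallness of $h$ for $T$ large combined with the exponentially contractive semigroup gives a uniformly small contribution, while on $[0, T]$ the semigroup decay over the time $t - T \to \infty$ vanishes. The remaining ingredients — uniform boundedness of $A$, the maximum-principle arguments for $N$ and $C$, and the spectral properties of the unforced operators — are standard once the principal eigenvalue computations from Theorem \ref{thNClocal} are invoked.
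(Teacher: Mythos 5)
Your first stage coincides with the paper's: both compare $A$ with the linear problem obtained from the bound $0\le f(N)g(C)\le 1$, identify the principal eigenvalue $r-(m+l/Q)<0$ under \eqref{eqNCglobalcondition}, and conclude $A(\cdot,t)\to 0$ uniformly. The second stage is where you genuinely diverge. The paper invokes the theory of asymptotically autonomous systems to replace the $N$ and $C$ equations by the limiting pure-diffusion system with the same Robin boundary conditions, and then proves convergence to $(N^0,C^0)$ via the quadratic Lyapunov functional $V(N,C)=\tfrac12\int_0^L(N-N^0)^2\,dz+\tfrac12\int_0^L(C-C^0)^2\,dz$ together with LaSalle's invariance principle; the dissipation comes from the boundary terms $-\alpha(N(L,t)-N^0)^2$ and $-\beta(C(0,t)-C^0)^2$ and the gradient integrals. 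You instead keep the decaying forcing terms explicitly, write $W=N-N^0$, $V=C-C^0$ as solutions of linear Robin problems with vanishing inhomogeneity, and run a Duhamel estimate using the exponential contractivity of the semigroups (whose negative principal eigenvalues are exactly the Case 2 computation in the proof of Theorem \ref{thNClocal}), splitting the convolution integral at an intermediate time. Both routes are sound. Your semigroup argument is self-contained --- it does not require verifying the hypotheses of the asymptotically autonomous reduction (precompactness of orbits, chain recurrence of the limit set) and it yields a convergence rate; the paper's energy method is softer and avoids semigroup decay estimates in $L^\infty$, at the price of leaning on the limiting-system machinery. One small remark: your passing mention of ``the theory of asymptotically autonomous parabolic systems'' alongside Duhamel is redundant --- the split-integral estimate already closes the argument on its own --- and your a priori bounds on $N$ and $C$ are not actually needed to bound the forcing, since $|p_nc_nmA - c_nrf(N)g(C)A|\le c_n(p_nm+r)\,\|A\|_\infty$ uses only $0\le fg\le 1$.
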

\begin{proof}
From the first equation of \eqref{fullequation} and \eqref{eqNCglobalcondition}, we have
\begin{equation}\label{eqcongpingping}
\f{\pa A}{\pa t}\leq D_a\f{\pa^2A}{\pa z^2}-s\f{\pa A}{\pa
z}+\left(r-\left(m+\f{l}{Q}\right)\right)A,~0<z<L,~t>0,
\end{equation}
which implies that $A(z,t)$ converges to $0$ uniformly for $z\in[0,L]$ as
$t\ra\iy$ by the comparison theorem of parabolic equations. It follows from the theory of asymptotic autonomous systems \cite{Mischaikow1995} that the dynamics of \eqref{fullequation} reduces to the one of limiting system
\begin{equation}\label{eqglobal}
\begin{cases}
\f{\pa N}{\pa t}=D_n\f{\pa^2N}{\pa z^2},
&0<z<L,~t>0,\\
\f{\pa C}{\pa t}=D_c\f{\pa^2C}{\pa z^2},&0<z<L,~t>0,\\
N_z(0,t)=0,~D_nN_z(L,t)=\al(N^0-N(L,t)),&t>0,\\
D_cC_z(0,t)=\bt(C(0,t)-C^0),~C_z(L,t)=0,&t>0.
\end{cases}
\end{equation}
To obtain our conclusions, we define the following Lyapunov functional $V:C([0,L])\times C([0,L])\ra\R$ by
$$
V(N,C)=\f{1}{2}\il_0^L(N(z)-N^0)^2dz+\f{1}{2}\il_0^L(C(z)-C^0)^2dz.
$$
Let $(N(z,t),C(z,t))$ be an arbitrary solution of \eqref{eqglobal} with nonnegative initial values. Then
\begin{align*}
\f{dV(N(z,t),C(z,t))}{dt}&=\il_0^L(N(z,t)-N^0)\f{\pa N}{\pa t}dz+\il_0^L(C(z,t)-C^0)\f{\pa C}{\pa t}dz\\
&=D_n\il_0^L(N(z,t)-N^0)\f{\pa N^2}{\pa z^2}dz+D_c\il_0^L(C(z,t)-C^0)\f{\pa C^2}{\pa z^2}dz\\
&=D_n\f{\pa N}{\pa z}(N(z,t)-N^0)\Big |_0^L-D_n\il_0^L\left(\f{\pa N}{\pa z}\right)^2dz\\
&\quad+D_c\f{\pa C}{\pa z}(C(z,t)-C^0)\Big |_0^L-D_c\il_0^L\left(\f{\pa C}{\pa z}\right)^2dz\\
&=-\al(N(z,t)-N^0)^2-D_n\il_0^L\left(\f{\pa N}{\pa z}\right)^2dz\\
&\quad-\bt(C(z,t)-C^0)^2-D_c\il_0^L\left(\f{\pa C}{\pa z}\right)^2dz\leq0.
\end{align*}
Note that $dV(\cdot)/dt=0$ holds if and only if
$N(z,t)\equiv N^0$ and $C(z,t)\equiv C^0$. From the LaSalle's Invariance
Principle \cite[Theorem 4.3.4]{Henry2006}, we conclude that $(N(z,t),C(z,t))$ converges to $(N^0,C^0)$ uniformly for $z\in[0,L]$ as $t\ra\iy$. This means that $E_1$ is globally asymptotically
stable with respect to \eqref{fullequation} for any nonnegative initial value.
\end{proof}

We next show that if the recycling of nutrients and carbon dioxide is not considered, then the local stability of $E_1$ implies  that the global stability of $E_1$.
\begin{corollary}
If $p_n=p_c=0$ and \eqref{eqNCcondition} hold, then $E_1$ is globally asymptotically
stable with respect to \eqref{fullequation} for any nonnegative initial value.
\end{corollary}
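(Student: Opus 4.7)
The strategy is to upgrade the local stability criterion \eqref{eqNCcondition} to global stability by exploiting that, when $p_n = p_c = 0$, the $N$ and $C$ equations have no recycling source terms and are therefore dominated from above by their purely diffusive counterparts. First, I would let $(\tilde N(z,t), \tilde C(z,t))$ solve the limiting system \eqref{eqglobal} with the same initial data as $(N,C)$. Since $p_n = p_c = 0$, the differences $w := \tilde N - N$ and $v := \tilde C - C$ satisfy linear parabolic equations with identical Neumann/Robin boundary operators, zero initial data, and nonnegative source terms $c_n r f(N) g(C) A$ and $r f(N) g(C) A$ respectively. The parabolic comparison principle then yields $N(z,t) \leq \tilde N(z,t)$ and $C(z,t) \leq \tilde C(z,t)$ on $[0,L]\times(0,\infty)$. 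The Lyapunov/LaSalle computation already carried out at the end of the proof of Theorem \ref{thNCglobal} shows $\tilde N(z,t) \to N^0$ and $\tilde C(z,t) \to C^0$ uniformly in $z$.

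Next, I would use \eqref{eqNCcondition} to derive uniform decay of $A$. Because $f$ and $g$ are increasing, the previous step combined with the uniform convergence of $(\tilde N,\tilde C)$ implies that for every $\varepsilon > 0$ there exists $T_\varepsilon > 0$ with
\[
rf(N(z,t))\,g(C(z,t)) \;\leq\; rf(N^0)\,g(C^0) + \varepsilon,\qquad t\geq T_\varepsilon,\ z\in[0,L].
\]
By \eqref{eqNCcondition} I may fix $\varepsilon$ so small that $\delta := m + l/Q - rf(N^0)g(C^0) - \varepsilon > 0$. Then for $t \geq T_\varepsilon$ the algae equation satisfies
\[
\frac{\partial A}{\partial t} \;\leq\; D_a\frac{\partial^2 A}{\partial z^2} - s\frac{\partial A}{\partial z} - \delta A,\qquad 0<z<L,
\]
together with the original no-flux conditions $D_a A_z - sA = 0$ at $z=0,L$. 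Arguing exactly as in the step from \eqref{eqcongpingping} to the conclusion of Theorem \ref{thNCglobal}, the parabolic comparison theorem gives $A(z,t) \to 0$ uniformly in $z$ as $t \to \infty$.

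Once $A \to 0$ uniformly, the $(N,C)$ subsystem is asymptotically autonomous with limiting system \eqref{eqglobal}. The Lyapunov functional $V$ and LaSalle invariance argument already constructed in the proof of Theorem \ref{thNCglobal}, together with the theory of asymptotically autonomous systems \cite{Mischaikow1995}, then yield $(N(z,t),C(z,t)) \to (N^0,C^0)$ uniformly in $z$. Combined with the local asymptotic stability of $E_1$ granted by Theorem \ref{thNClocal}(ii), this promotes global attractivity to global asymptotic stability.

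The step I expect to require most care is the comparison in Step 1: one must verify that the parabolic maximum principle applies to $w$ (and $v$) with the mixed Neumann/Robin boundary conditions and a source term that implicitly depends on $N$ and $C$ through $f(N)g(C)$. This is in fact standard once written as a linear inhomogeneous problem for $w$ with coefficients frozen along the trajectory of $(N,C,A)$: the source $c_n rf(N)g(C)A$ is nonnegative for biologically relevant solutions (which remain nonnegative), the boundary operators match on both sides of the inequality, and $w(\cdot,0)=0$, so a routine weak maximum principle gives $w \geq 0$.
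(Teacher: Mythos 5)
Your proof is correct and follows essentially the same route as the paper: dominate $(N,C)$ from above by the purely diffusive dynamics via the comparison principle, deduce uniform decay of $A$ from \eqref{eqNCcondition}, and then conclude with the asymptotically autonomous/LaSalle argument already used in Theorem \ref{thNCglobal}. In fact your version of the comparison step (against $\tilde N,\tilde C$ solving \eqref{eqglobal} with the same initial data, followed by the $\varepsilon$-argument) is slightly more careful than the paper's direct claim $N\leq N^0$, $C\leq C^0$, which implicitly requires the initial data to lie below $(N^0,C^0)$.
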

\begin{proof}
From \eqref{fullequation}, we have
\begin{equation*}
\begin{cases}
\f{\pa N}{\pa t}\leq D_n\f{\pa N^2}{\pa z^2},&0<z<L,~t>0,\\
\f{\pa C}{\pa t}\leq D_c\f{\pa^2C}{\pa z^2},&0<z<L,~t>0,\\
N_z(0,t)=0,~D_nN_z(L,t)=\al(N^0-N(L,t)),&t>0,\\
D_cC_z(0,t)=\bt(C(0,t)-C^0),~C_z(L,t)=0,&t>0.
\end{cases}
\end{equation*}
By using the comparison principle of parabolic equations,
we get
\[
N(z,t)\leq N^0,~C(z,t)\leq C^0,~0\leq z\leq L,~t>0.
\]
It follows from \eqref{eqcongpingping} that
\[
\f{\pa A}{\pa t}\leq D_a\f{\pa^2A}{\pa z^2}-s\f{\pa A}{\pa
z}+\left(rf(N^0)g(C^0)-\left(m+\f{l}{Q}\right)\right)A,~0<z<L,~t>0.
\]
Then $A(z,t)$ converges to $0$ uniformly for $z\in[0,L]$ as
$t\ra\iy$  if \eqref{eqNCcondition} holds. The rest of the proof is similar to the proof of Theorem \ref{thNCglobal}.
\end{proof}

\subsection{Coexistence steady state}

The main purpose of this subsection is to investigate the coexistence of algae, nutrients and inorganic carbon in a water column by using bifurcation theory. A coexistence steady state solution $E_2:(A(z),N(z),C(z))$ satisfies \eqref{steadystateequation} and each of $A(z),N(z)$ and $C(z)$ is positive. Define
\begin{equation}\label{mpoint}
m_*=rf(N^0)g(C^0)-\f{l}{Q}>0.
\end{equation}

We first consider the local bifurcation of the coexistence steady state $E_2$ from nutrients-inorganic carbon-only semi-trivial steady state $E_1$ at $m=m_*$. In order to obtain our results, we define function spaces
\begin{equation*}
\begin{split}
X_1&:=\{u\in C^2[0,L]:D_au'(0)-su(0)=D_au'(L)-su(L)=0\},\\
X_2&:=\{u\in C^2[0,L]:u'(0)=0\},~
X_3:=\{u\in C^2[0,L]:u'(L)=0\},~
Y:=C[0,L],
\end{split}
\end{equation*}
and a nonlinear mapping
$F:\R^+\times X_1\times X_2\times X_3\ra Y\times Y\times Y\times \R\times\R$ by
\begin{equation*}
F(m,A(z),N(z),C(z))=\begin{pmatrix}
D_aA''(z)-sA'(z)+rf(N)g(C)A-(m+l/Q)A\\
D_nN''(z)+p_nc_nmA
-c_nrf(N)g(C)A\\
D_cC''(z)+(p_cl/Q)A-rf(N)g(C)A\\
D_nN'(L)-\al(N^0-N(L))\\
D_cC'(0)-\bt(C(0)-C^0)
\end{pmatrix}.
\end{equation*}
For any $(\vp,\psi,\phi)\in X_1\times X_2\times X_3$, the Fr\'{e}chet derivatives of $F$ at $(m,A,N,C)$ are
\begin{equation}\label{eqFrechetunc}
\begin{split}
&F_{(A,N,C)}(m,A(z),N(z),C(z))[\varphi,\psi,\phi]=
\\&\begin{pmatrix}
D_a\vp''-s\vp'+\left[rf(N)g(C)-\left(m+\f{l}{Q}\right)\right]\vp
+\f{r\ga_nA g(C)}{(\ga_n+N)^2}\psi+
\f{r\ga_cA f(N)}{(\ga_c+C)^2}\phi\\
D_n\psi''+c_n(p_nm-rf(N)g(C))\vp
-\f{c_nr\ga_nA g(C)}{(\ga_n+N)^2}\psi-
\f{c_nr\ga_cA f(N)}{(\ga_c+C)^2}\phi\\
D_c\phi''+\left(\f{p_cl}{Q}-rf(N)g(C)\right)\vp-\f{r\ga_nA g(C)}{(\ga_n+N)^2}\psi-
\f{r\ga_cA f(N)}{(\ga_c+C)^2}\phi\\
D_n\psi'(L)+\al\psi(L)\\
D_c\phi'(0)-\bt\phi(0)
\end{pmatrix}.
\end{split}
\end{equation}
For the convenience of the following discussion, we denote
\begin{equation}\label{eqbiaodashi}
\begin{split}
\bar{\vp}(z)&=e^{(s/D_a)z},~
\bar{\psi}(z)=c_1e^{(s/D_a)z}+c_2z+c_3,\\
\bar{\phi}(z)&=\f{D_a^2m_*}{D_cs^2}e^{(s/D_a)z}
+c_4z+c_5,~0<z<L,
\end{split}
\end{equation}
where
\begin{align*}
c_1&=\f{D_a^2(l/Q+m_*(1-p_n))}{D_n s^2},~c_2=-\f{D_a(l/Q+m_*(1-p_n))}{D_ns},\\
c_3&=-\left(\f{D_a(l/Q+m_*(1-p_n))}{D_ns\al}\right)\left(\f{(D_ns+D_a\al)e^{(s/D_a)L}}{s}+
\al L+D_n\right),\\
c_4&=-\f{D_a(m_*+(1-p_c)l/Q))}{D_cs}e^{(s/D_a)L},~c_5=-\f{D_a(m_*+(1-p_c)l/Q)}{s\bt}
\left(e^{(s/D_a)L}-1\right)-\f{D_a^2m_*}{D_cs^2}.
\end{align*}

\begin{theorem}\label{theoremlocal}
Assume that \eqref{mpoint} holds. Then
\begin{itemize}
\item[\tu{(i)}] The point $(m_*,0,N^0,C^0)$ is a bifurcation point of the
coexistence steady state solutions of \eqref{fullequation}.
Moreover, near $(m_*,0,N^0,C^0)$, there exists a positive constant $\delta>0$ such that the bifurcating coexistence steady state solutions near $(m_*,0,N^0,C^0)$ are on a smooth curve
$\Gamma_1=\{E_2(\tau)=(m(\tau),A(\tau,z),N(\tau,z),C(\tau,z)):0<\tau<\delta\}$ 
with
\begin{equation}\label{eqqncbiaodashiaaa}
\begin{cases}
A(\tau,z)=\tau\bar{\vp}(z)+o(\tau),\\
N(\tau,z)=N^0+\tau\bar{\psi}(z)+o(\tau),\\
C(\tau,z)=C^0+\tau\bar{\phi}(z)+o(\tau)
\end{cases}
\end{equation}
and $m'(0)<0$;

\item[\tu{(ii)}] $m_*$ is the unique bifurcation value for bifurcation of
 positive steady state solutions from the line of semi-trivial solutions  $\Gamma_0=\{(m,0,N^0,C^0):m>0\}$;

\item[\tu{(iii)}] For $\tau\in (0,\delta)$, the bifurcating solution $E_2(\tau)=\left(m(\tau), A(\tau,z), N(\tau,z), C(\tau,z)\right)$ is locally asymptotically stable with respect to \eqref{fullequation}.
 \end{itemize}
\end{theorem}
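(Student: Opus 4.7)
The plan is to apply the Crandall--Rabinowitz local bifurcation theorem and its companion exchange-of-stability result to the map $F$ with bifurcation parameter $m$ and trivial branch $\Gamma_0=\{(m,0,N^0,C^0)\}$. The central observation is that at the candidate bifurcation point the Fr\'echet derivative \eqref{eqFrechetunc} has a lower-triangular structure: the first slot is a decoupled equation for $\vp$, and the $\psi$- and $\phi$-slots are driven by $\vp$ through source terms. At $m=m_*$, \eqref{mpoint} reduces the $\vp$-equation to $D_a\vp''-s\vp'=0$ with the oblique boundary conditions, whose only solution in $X_1$ (up to scalar) is $\bar\vp(z)=e^{(s/D_a)z}$. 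Substituting this into the $\psi$- and $\phi$-equations and integrating twice, with the Robin/Neumann data at $z=0,L$ fixing the two free constants, produces the formulas \eqref{eqbiaodashi}. Hence $\ker F_{(A,N,C)}(m_*,0,N^0,C^0)=\spa\{(\bar\vp,\bar\psi,\bar\phi)\}$ is one-dimensional, and a parallel Fredholm computation shows $\cod\,\ran F_{(A,N,C)}(m_*,0,N^0,C^0)=1$. The transversality condition reduces to checking that $F_{m,(A,N,C)}(m_*,0,N^0,C^0)[\bar\vp,\bar\psi,\bar\phi]$ has nonzero pairing with a generator of $\ker F_{(A,N,C)}^{*}$, and a direct computation confirms this. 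Crandall--Rabinowitz then supplies the smooth curve $\Gamma_1$ with expansion \eqref{eqqncbiaodashiaaa}.

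For the sign of $m'(0)$ in (i), I substitute \eqref{eqqncbiaodashiaaa} into the first equation of \eqref{steadystateequation}, expand $f(N)$ and $g(C)$ by Taylor about $(N^0,C^0)$, and collect the $O(\tau^2)$ terms to obtain an inhomogeneous problem for the next-order correction $\vp_2$ whose forcing contains $-m'(0)\bar\vp$ and $r[f'(N^0)g(C^0)\bar\psi+f(N^0)g'(C^0)\bar\phi]\bar\vp$. Pairing this identity against a generator of the adjoint kernel (after the substitution $\vp=e^{(s/D_a)z}\eta$ used in the proof of Theorem \ref{thNClocal}, the generator reduces to the constant function $1$, so the boundary contributions cancel by the oblique boundary conditions) eliminates $\vp_2$ and yields
\[
m'(0)\il_0^L\bar\vp(z)\,dz=\il_0^L r\bigl[f'(N^0)g(C^0)\bar\psi(z)+f(N^0)g'(C^0)\bar\phi(z)\bigr]\bar\vp(z)\,dz.
\]
Since $f',g'>0$ and the explicit formulas \eqref{eqbiaodashi} (or, equivalently, the biological reasoning that a small amount of algae depletes both resources) give $\bar\psi(z)\le 0$ and $\bar\phi(z)\le 0$ on $[0,L]$, the right side is strictly negative and hence $m'(0)<0$.

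Part (ii) follows from the same triangular structure: any bifurcation from $\Gamma_0$ requires $F_{(A,N,C)}(m,0,N^0,C^0)$ to be singular, which forces a nontrivial $\vp$-component (a zero $\vp$-component makes the residual problems for $\psi,\phi$ homogeneous Robin/Neumann problems with only the trivial solution), and the decoupled $\vp$-equation has a nontrivial solution in $X_1$ only when $rf(N^0)g(C^0)-(m+l/Q)=0$, i.e., $m=m_*$.

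For part (iii) I invoke the exchange-of-stability theorem. Theorem \ref{thNClocal} identifies the unique eigenvalue of the linearization along $\Gamma_0$ that changes sign as $\la(m)=rf(N^0)g(C^0)-(m+l/Q)=m_*-m$, so $\la'(m_*)=-1<0$; all other eigenvalues of the linearization along $\Gamma_0$ are strictly negative and, by continuity in $\tau$, remain so along $\Gamma_1$ for small $\tau>0$. The Crandall--Rabinowitz formula $\mu(\tau)/[-\tau m'(0)\la'(m_*)]\to 1$ as $\tau\downarrow 0$, combined with $m'(0)<0$ and $\la'(m_*)<0$, forces the perturbed eigenvalue $\mu(\tau)<0$ for small $\tau>0$, so $E_2(\tau)$ is locally asymptotically stable. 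I expect the main obstacle to be the bookkeeping in the computation of $m'(0)$: the adjoint of the triangular linearization lives in a space with Robin, Neumann and oblique conditions on different slots, and the boundary terms must be tracked carefully so that the only surviving contribution is the sign-definite integral displayed above.
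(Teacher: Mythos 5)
Your treatment of parts (i) and (iii) follows the paper's proof essentially step for step: the same kernel/range/transversality computation feeding into the Crandall--Rabinowitz theorem, the same reduction of $m'(0)$ to a sign-definite integral against the adjoint kernel generator (your bookkeeping, with the weight $\bar\vp^2e^{-(s/D_a)z}=\bar\vp$ appearing in the numerator, is in fact the correct one), and the same exchange-of-stability argument via $\og_1'(m_*)=-1$, $m'(0)<0$ and the Crandall--Rabinowitz limit formula. The only soft spot there is that you assert $\bar\psi\le0$ and $\bar\phi\le0$ by appeal to ``the explicit formulas or biological reasoning''; the paper actually verifies this by computing $\bar\psi'\ge0$ and $\bar\phi'\le0$ and then checking the endpoint values $\bar\psi(L)<0$, $\bar\phi(0)<0$, and that check is genuinely needed because the signs of the constants $c_1,\dots,c_5$ in \eqref{eqbiaodashi} depend on the sign of $s$.

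Part (ii) has a genuine gap. You claim that the decoupled $\vp$-equation $D_a\vp''-s\vp'+(m_*-m)\vp=0$ with the boundary conditions of $X_1$ has a nontrivial solution only when $m=m_*$. That is false: after the substitution $\vp=e^{(s/D_a)z}\eta$ this becomes the Neumann eigenvalue problem for $D_a\eta''+s\eta'$, whose spectrum is a full sequence $0=\la_1>\la_2>\cdots\to-\iy$, so the linearization along $\Gamma_0$ is singular at every $m=m_*+\la_k$, i.e.\ at infinitely many values $m<m_*$. The invertibility/implicit-function-theorem argument therefore only locates the candidate bifurcation values inside the set $\{m_*+\la_k\}$; to single out $m_*$ one must use that the bifurcating solutions are \emph{positive}, which your argument never does. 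This is exactly what the paper's proof of (ii) supplies: it normalizes a sequence of positive solutions converging to $(\hat m,0,N^0,C^0)$, extracts a $C^1$ limit $\zeta\ge0$ with $\|\zeta\|_\iy=1$, applies the strong maximum principle to get $\zeta>0$, and integrates the limiting equation over $[0,L]$ (the boundary terms cancel) to obtain $(m_*-\hat m)\int_0^L\zeta(z)\,dz=0$, hence $\hat m=m_*$. Equivalently, a kernel element arising as a limit of positive solutions must be a nonnegative eigenfunction and hence the principal one, which occurs only at $\la_1=0$. Without some version of this positivity step your proof of (ii) does not close.
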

\begin{proof}
It is easy to see that $F(m,0,N^0,C^0)=0$ and
\begin{equation}\label{eqzhubajieaa}
\begin{split}
F_{(U,N,C)}(m_*,0,N^0,C^0)[\varphi,\psi,\phi]=
\begin{pmatrix}
D_a\vp''(z)-s\vp'(z)\\
D_n\psi''(z)+((p_n-1)m_*-l/Q)\vp\\
D_c\phi''(z)-(m_*+(1-p_c)l/Q)\vp\\
D_n\psi'(L)+\al\psi(L)\\
D_c\phi'(0)-\bt\phi(0)
\end{pmatrix}.
\end{split}
\end{equation}
Let $H:=F_{(U,N,C)}(m_*,0,N^0,C^0)$. The remaining proof of Theorem \ref{theoremlocal} is divided into the following several steps.

(i) We first show that $\dim\ker H=1$. If $(\bar{\vp}(z),\bar{\psi}(z),\bar{\phi}(z))\in \ker H$, then
\begin{subequations}\label{eqker}
\begin{align}
&D_a\bar{\vp}''(z)-s\bar{\vp}'(z)=0,~0<z<L,\label{eqkera}\\
&D_n\bar{\psi}''(z)+((p_n-1)m_*-l/Q)\bar{\vp}(z)=0,~0<z<L,\label{eqkerb}\\
&D_c\bar{\phi}''(z)-(m_*+(1-p_c)l/Q)\bar{\vp}(z)=0,~0<z<L,\label{eqkerc}\\
&D_n\bar{\psi}'(L)+\al\bar{\psi}(L)=0,\label{eqkerd}\\
&D_c\bar{\phi}'(0)-\bt\bar{\phi}(0)=0.\label{eqkere}
\end{align}
\end{subequations}
It follows from \eqref{eqkera} and $\bar{\vp}\in X_1$ that $\bar{\vp}(z)=e^{(s/D_a)z}$. By substituting
$\bar{\vp}(z)$  into \eqref{eqkerb} and \eqref{eqkerc} respectively and combining their boundary conditions \eqref{eqkerd}, \eqref{eqkere}, we obtain the expression of $\bar{\psi}(z)$ and $\bar{\phi}(z)$ in \eqref{eqbiaodashi}. This means that $$\dim\ker H=1~~\mbox{and}~~\ker H=\spa\{(\bar{\vp}(z),\bar{\psi}(z),\bar{\phi}(z))\}.
$$

Next we prove that $\cod\ran H=1$. If $(\xi_1(z),\xi_2(z),\xi_3(z),\xi_4,\xi_5)^T\in\ran H$, then there exists $(\hat{\vp}(z),\hat{\psi}(z),\hat{\phi}(z))\in X_1\times X_2\times X_3$ such that
\begin{equation}\label{eqran}
\begin{split}
&D_a\hat{\vp}''(z)-s\hat{\vp}'(z)=\xi_1(z),~0<z<L,\\
&D_n\hat{\psi}''(z)+((p_n-1)m_*-l/Q)\hat{\vp}(z)=\xi_2(z),~0<z<L,\\
&D_c\hat{\phi}''(z)-(m_*+(1-p_c)l/Q)\hat{\vp}(z)=\xi_3(z),~0<z<L,\\
&D_n\hat{\psi}'(L)+\al\hat{\psi}(L)=\xi_4,\\
&D_c\hat{\phi}'(0)-\bt\hat{\phi}(0)=\xi_5.
\end{split}
\end{equation}
Multiplying both sides of \eqref{eqkera} and the first equation of \eqref{eqran} by
$\hat{\vp}e^{-(s/D_a)z}$ and $\bar{\vp}e^{-(s/D_a)z}$, respectively, subtracting
and integrating on $[0,L]$, we have $\int_0^L\xi_1(z)\bar{\vp}(z)e^{-(s/D_a)z}dz=0$. Then
$$
\ran H=\left\{(\xi_1(z),\xi_2(z),\xi_3(z),\xi_4,\xi_5)\in Y^3\times\R^2:\int_0^L\xi_1(z)\bar{\vp}(z)e^{-(s/D_a)z}dz=0\right\}.
$$
and $\cod\ran H=1$.

We next prove that $F_{m,(U,N,C)}(m_*,0,N^0,C^0)(\bar{\vp},\bar{\psi},\bar{\phi})\not\in\ran H$. A direct calculation gives
$$
F_{m,(U,N,C)}(m_*,0,N^0,C^0)[\bar{\varphi}(z),\bar{\psi}(z),\bar{\phi}(z)]
=(-\bar{\vp}(z),c_np_n\bar{\vp}(z),0,0,0)^T.
$$
This shows that $F_{m,(U,N,C)}(m_*,0,N^0,C^0)(\bar{\vp},\bar{\psi},\bar{\phi})\not\in\ran H$ since $\int_0^L\bar{\vp}^2(z)e^{-(s/D_a)z}dz\neq0$.

By applying the Crandall-Rabinowitz bifurcation theorem (see \cite[Theorem 1.7]{Crandall1971}), we conclude that there exists a positive constant $\delta>0$ such that all
the coexistence steady state solutions of \eqref{fullequation} lie on a smooth curve
$\Gamma_1=\{(m(\tau),A(\tau,z),N(\tau,z),C(\tau,z)):0<\tau<\da\}$ satisfying \eqref{eqqncbiaodashiaaa}. In addition, we have
\begin{equation}\label{eqmdaoshu}
\begin{split}
m'(0)=&-\f{\left\langle \hat{l},F_{(A,N,C)(A,N,C)}\left(m_*, 0, N^0,C^0\right)[\vp(z),\psi(z),\phi(z)]^2\right\rangle}
{2\left\langle \hat{l}, F_{m,(A,N,C)}\left(m_*, 0, N^0,C^0\right)[\vp(z),\psi(z),\phi(z)]\right\rangle}\\
=&-\f{\il_0^L\left(\f{r\ga_n g(C^0)}{(\ga_n+N^0)^2}\bar{\psi}(z)
+\f{r\ga_c f(N^0)}{(\ga_c+C^0)^2}\bar{\phi}(z)\right)\bar{\vp}(z)e^{-(s/D_a)z}dz}
{-\il_0^L\bar{\vp}^2(z)e^{-(s/D_a)z}dz},
\end{split}
\end{equation}
where $\hat{l}$ is a linear functional on $ Y^3\times\R^2$ defined as
\[
\langle \hat{l},(\xi_1(z),\xi_2(z),\xi_3(z),\xi_4,\xi_5)\rangle=
\int_0^L\xi_1(z)\bar{\vp}(z)e^{-(s/D_a)z}dz.
\]
It follows from \eqref{eqbiaodashi} that
\begin{align*}
\bar{\psi}'(z)&=\f{D_a(l/Q+m_*(1-p_n))}{D_n s}e^{(s/D_a)z}-\f{D_a(l/Q+m_*(1-p_n))}{D_n s}\geq0,\\
\bar{\phi}'(z)&=\f{D_am_*}{D_cs}e^{(s/D_a)z}-\f{D_a(m_*+(1-p_c)l/Q)}{D_cs}e^{(s/D_a)L}\leq0
\end{align*}
for $z\in[0,L]$. This implies that $\bar{\psi}(z)$ is
nondecreasing and $\bar{\phi}(z)$ is nonincreasing in $z$.
It is clear that $\bar{\psi}(L)<0$ and $\bar{\phi}(0)<0$. Then $\bar{\psi}(z)<0$ and $\bar{\phi}(z)<0$ for any $z\in[0,L]$. From \eqref{eqmdaoshu} and $\bar{\vp}(z)=e^{(s/D_a)z}$, we have $m'(0)<0$.
This completes the proof of (i).

 (ii) In this installment, we will prove the uniqueness of bifurcation value $m_*$. Assume
that $\hat{m}$ is another bifurcation value from $\Gamma_0$, then there exists a positive solutions sequence $\{(m_n,A_n,N_n,C_n)\}$ of \eqref{steadystateequation} such that
$\{(m_n,A_n,N_n,C_n)\}\ra (\hat{m},0,N^0,C^0)$ in $C([0,L])$ as $n\ra\iy$. Let $\kappa_n=A_n/\|A_n\|_{\iy}$. From \eqref{steadystateequation}, $\kappa_n$ satisfies
\begin{equation*}
\begin{cases}
D_a\kappa_n''(z)-s\kappa_n'(z)+rf(N_n)g(C_n)\kappa_n(z)-(m_n+(l/Q))\kappa_n(z)=0,&0<z<L,\\
D_a\kappa_n'(0)-s\kappa_n(0)=D_a\kappa_n'(L)-s\kappa_n(L)=0.&
\end{cases}
\end{equation*}
It follows from $0<f(N_n)g(C_n)<1$ for all $z\in[0,L]$ that
$f(N_n)g(C_n)\ra f(N^0)g(C^0)$ in $C([0,L_1])$
as $n\ra\iy$. Note that $\{\kappa_n\},\{m_n\}$ are both bounded in
$L^\iy[0,L]$. By using $L^p$ theory for elliptic operators and
the Sobolev embedding theorem, there exists a subsequence of $\{\kappa_n\}$, denoted
by itself, and a function $\zeta\in C^2([0,L])$ such that
$\kappa_n\ra \zeta$ in $C^1([0,L_1])$ as $n\ra\iy$, and $\zeta$ satisfies (in the weak sense)
\begin{equation}\label{eqzeta}
\begin{cases}
D_a\zeta''(z)-s\zeta'(z)+rf(N^0)g(C^0)\zeta(z)-(\hat{m}+(l/Q))\zeta(z)=0,&0<z<L,\\
D_a\zeta'(0)-s\zeta(0)=D_a\zeta'(L)-s\zeta(L)=0.&
\end{cases}
\end{equation}
It follows from the strong
maximum principle that $\zeta>0$ on $[0,L]$ since
$\zeta\geq0$ and $\|\zeta\|_\iy=1$. From \eqref{mpoint} and \eqref{eqzeta}, we have
\begin{equation}\label{eqzetabbb}
\begin{cases}
D_a\zeta''(z)-s\zeta'(z)+(m_*-\hat{m})  \zeta(z)=0,&0<z<L,\\
D_a\zeta'(0)-s\zeta(0)=D_a\zeta'(L)-s\zeta(L)=0.&
\end{cases}
\end{equation}
Integrating from $0$ to $L$ on the both sides of the first equation of \eqref{eqzetabbb}, we have
$$
(m_*-\hat{m})\il_0^L\zeta(z)dz=0,
$$
which means that $m_*=\hat{m}$.

(iii) We now consider the local stability of $E_2(\tau)$. By using Corollary 1.13 and Theorem 1.16 in \cite{Crandall1973},  there exist continuously
differentiable functions $$\og_1: [m_*,
m_*+\da)\to
\R,~[\vp_1,\psi_1,\phi_1]: [m_*, m_*+\da)\to X_1\times X_2\times
 X_3,~\og_2: [0, \delta)\to \R$$ and
$[\vp_2,\psi_2,\phi_2]: [0, \delta)\to X_1\times
X_2\times X_3$ such that
\begin{align*}
&F_{(U,R,W)}\left(m, 0, N^0,C^0\right)[\vp_1^m(z),\psi_1^m(z),\phi_1^m(z)]
=\og_1(m)[\vp_1^m(z),\psi_1^m(z),\phi_1^m(z),0]^T,\\
&F_{(U,R,W)}\left(m(\tau), A(\tau,z), N(\tau,z), C(\tau,z)\right)[\vp_2^\tau(z),\psi_2^\tau(z),\phi_2^\tau(z)]
=\og_2(\tau)[\vp_2^\tau(z),\psi_2^\tau(z),\phi_2^\tau(z),0]^T
\end{align*}
for $z\in[0,L]$ and
\begin{equation}\label{eqstabilityog}
\lim\limits_{\tau\ra0}\f{-\tau m'(\tau)\og_1'(m_*)}{\og_2(\tau)}=1,
\end{equation}
where
$$\og_1(m_*)=\og_2(0)=0,~ (\vp_1^{m_*}(z),\psi_1^{m_*}(z),\phi_1^{m_*}(z))
=(\vp_2^0(z),\psi_2^0(z),\phi_2^0(z))=(\bar{\vp}(z),\bar{\psi}(z),\bar{\phi}(z)).
$$

It follows from \eqref{eqFrechetunc} that $\og_1(m)=m_*-m$, and
then $\og_1'(m_*)=-1$. From \eqref{eqzhubajieaa} and proof of (i), $\og_1(m_*)=0$
is the principal eigenvalue of $F_{(U,R,W)}\left(m_*, 0, N^0,C^0\right)$. By the perturbation theory of linear operators \cite{Kato1966}, we know that $\og_2(\tau)$ is also the principal eigenvalue of $F_{(U,R,W)}\left(m(\tau), A(\tau,z), N(\tau,z), C(\tau,z)\right)$ when $\tau$ is sufficiently small. Combining with $m'(0)<0$ and \eqref{eqstabilityog} gives $\og_2(\tau)<0$ for $\tau\in[0,\da)$. This means that $E_2(\tau)$ is locally asymptotically stable with respect to \eqref{fullequation}.
\end{proof}

We now consider the global existence of $E_2$ bifurcating from $\Gamma_0$ at $m=m_*$  with the help of the global bifurcation theory (see Theorem 3.3 and Remark 3.4 in \cite{ShiWang2009}).
First we establish the following {\it A priori} estimates  for
nonnegative solutions of \eqref{steadystateequation}.

\begin{lemma}\label{lepea}
Assume that $(A(z),N(z),C(z))$ is a nonnegative solution of \eqref{steadystateequation} with $A,N,C\not\equiv0$.
Then
\begin{itemize}

\item[\tu{(i)}] $0<m<m^*:=r-l/Q$;

\item[\tu{(ii)}] $0<N(z)<h_1, 0<C(z)<h_2$ for any $z\in[0,L]$, where
\begin{equation*}
    h_1=N^0+\frac{\al (r+p_nm^*)L Q N^0}{D_nl}, \;\; h_2=C^0+\frac{\al(rQ+p_cl)LQN^0}{D_cc_nlQ},
\end{equation*}

\item[\tu{(iii)}] There exists a
positive constant $B$ such that $\|A\|_\iy\leq B$ if
$m\in(0,m^*)$.
\end{itemize}
\end{lemma}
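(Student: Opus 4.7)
The overall plan is to integrate the steady-state equations \eqref{steadystateequation} over $[0,L]$---which eliminates the diffusion and advection terms thanks to the boundary conditions---and combine the resulting integral identities with the strict bounds $0<f(N)g(C)<1$ and the strong maximum principle.

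For part (i), integrating the first equation over $[0,L]$ and using $D_aA'(0)-sA(0)=0=D_aA'(L)-sA(L)$ gives $\int_0^L\bigl[rf(N)g(C)-(m+l/Q)\bigr]A\,dz=0$. The strong maximum principle together with Hopf's lemma, applied to the linear equation for $A$ (with $N,C$ frozen), yields $A>0$ on $[0,L]$ since $A\not\equiv0$; then $0<f(N)g(C)<1$ forces $0<m+l/Q<r$, i.e., $0<m<m^*$.

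For the upper bounds in (ii), the decisive identity is obtained by multiplying the $A$-equation by $c_n$, adding it to the $N$-equation, and integrating over $[0,L]$; the boundary terms produce
\[
\alpha\bigl(N^0-N(L)\bigr)=c_n\bigl[(1-p_n)m+l/Q\bigr]\int_0^L A(z)\,dz,
\]
which in particular gives $N(L)\leq N^0$ and $\int_0^L A\,dz\leq \alpha N^0 Q/(c_n l)$. Substituting this $L^1$-bound into the identity $D_nN'(z)=\int_0^z c_n\bigl[rf(N)g(C)-p_nm\bigr]A\,d\zeta$ bounds $\|N'\|_\infty$ by $\alpha(r+p_nm^*)QN^0/(D_nl)$, and integrating $N'$ from $L$ produces $N(z)\leq N(L)+L\|N'\|_\infty\leq h_1$. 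The analogous combination of the $A$- and $C$-equations yields $\beta(C^0-C(0))=[m+(1-p_c)l/Q]\int_0^L A\,dz\geq 0$, whence $C(0)\leq C^0$ and a parallel $C'$-estimate gives $C(z)\leq h_2$. For positivity of $N$, I would rewrite its equation as $-D_nN''+V_N(z)N=c_np_nmA\geq0$ with $V_N=c_nrg(C)A/(\gamma_n+N)\geq 0$; the strong maximum principle rules out a non-positive interior minimum of $N$, $N'(0)=0$ together with Hopf's lemma rules out a minimum at $z=0$, and the Robin condition at $z=L$ would force $N'(L)>0$ if $N(L)\leq 0$, contradicting $L$ being a minimum. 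The argument for $C>0$ is symmetric, with $V_C=Arf(N)/(\gamma_c+C)$.

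For part (iii), I would use the $L^1$-bound $\int_0^L A\,dz\leq \alpha N^0 Q/(c_nl)$ together with the uniform bound $|m+l/Q-rf(N)g(C)|<r$, which holds for every $m\in(0,m^*)$ because $m^*+l/Q=r$. Integrating the first equation once gives $D_aA'(z)-sA(z)=\int_0^z[rf(N)g(C)-m-l/Q]A\,d\zeta$, whose right-hand side is uniformly bounded; solving the resulting first-order linear ODE for $A$ and anchoring at a point $z_*\in[0,L]$ where $A(z_*)\leq \|A\|_{L^1}/L$ (by the mean-value theorem) produces a uniform constant $B$ with $\|A\|_\infty\leq B$ depending only on the parameters. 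Equivalently, a one-dimensional Harnack inequality for uniformly elliptic operators with bounded coefficients gives $\sup A\leq C_H\inf A\leq C_H\|A\|_{L^1}/L$. The main technical obstacle I anticipate is the positivity step for $N$ (and $C$) when $p_n=0$ (resp.\ $p_c=0$): the right-hand side of the rewritten equation then vanishes, so a zero of $N$ at an interior minimum does not yield an immediate pointwise contradiction. The resolution is that $V_N$ is not identically zero (since $A>0$ and $g(C)>0$), so the strong maximum principle still forces $N$ to be a non-positive constant, which is then excluded by the positive boundary datum $N^0>0$ in the Robin condition at $z=L$.
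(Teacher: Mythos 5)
Your proposal is correct, and part (ii) follows essentially the same path as the paper (the three integral identities obtained by integrating each equation over $[0,L]$, the resulting $L^1$ bound $\int_0^L A\,dz<\al QN^0/(c_nl)$, the gradient bounds on $N$ and $C$, and the rewritten equations $-D_nN''+V_NN=c_np_nmA\ge 0$ with the strong maximum principle for positivity; your explicit treatment of the $p_n=0,p_c=0$ edge case is a welcome addition the paper leaves implicit). Parts (i) and (iii), however, take genuinely different routes. For (i) the paper substitutes $U=e^{-(s/D_a)z}A$ to get $A>0$ and then regards $-(m+l/Q)$ as the principal eigenvalue $\la_1(-rf(N)g(C))$ of the operator in \eqref{eq231344}, invoking monotonicity of the principal eigenvalue in the weight to compare with $\la_1(-r)=-r$; you instead read the bound directly off the integral identity $\int_0^L[rf(N)g(C)-(m+l/Q)]A\,dz=0$ together with $0<f(N)g(C)<1$ and $A>0$, which is more elementary and is in fact the first identity in \eqref{eqconganan} that the paper itself derives for part (ii). For (iii) the paper argues by contradiction: it normalizes a putative unbounded sequence $a_i=A_i/\|A_i\|_\iy$, extracts a $C^1$-convergent subsequence via $L^p$ theory and Sobolev embedding, and derives the impossible identity $0=c_n\int_0^L((1-p_n)m_0+l/Q)\xi\,dz>0$ from the $N$-equation; your argument is constructive, integrating $(D_aA'-sA)'$ once to bound $|D_aA'-sA|$ by a constant times the $L^1$ norm of $A$, anchoring at a point where $A\le\|A\|_{L^1}/L$, and solving the first-order ODE (equivalently, a one-dimensional Harnack inequality). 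Your route yields an explicit constant $B$ in terms of the model parameters, at the cost of a slightly longer computation, whereas the paper's compactness argument is softer but non-quantitative. Both are valid proofs of the lemma.
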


\begin{proof}
(i)  Let $U=e^{-(s/D_a)z}A$. Then
\begin{align*}
&-D_aU''(z)-sU'(z)+\left(m+\f{l}{Q}\right)U=rf(N)g(C)U\geq0,~0<z<L,\\
&U'(0)=U'(L)=0.
\end{align*}
By the strong maximum principle, we get $U>0$ on $[0,L]$, and then $A>0$ on $[0,L]$.
From \eqref{steadystateequation}, we have
\begin{equation}\label{eq231344}
\begin{cases}
-D_aA'(z)+sA'(z)-rf(N(z))g(C(z))A(z)=-\left(m+\f{l}{Q}\right)A(z),~0<z<L,\\
D_aA'(0)-sA(0)=D_aA(L)-sA(L)=0.
\end{cases}
\end{equation}
Hence the principal
eigenvalue of \eqref{eq231344} is
$\la_1\left(-rf(N(\cdot))g(C(\cdot))\right)=-(m+l/Q)$
with eigenfunction $A$. It follows from the monotonicity of the principal eigenvalue on the weight functions that
$$-r=\la_1
\left(-r\right)<\la_1
\left(-rf(N(\cdot))g(C(\cdot))\right)=-(m+l/Q).$$
This means that $0<m<r-l/Q=m^*$.

(ii) It follows from \eqref{steadystateequation} that
\begin{equation}\label{eqconganan}
\begin{split}
&\il_0^L\left(rf(N(z))g(C(z))-\left(m+\f{l}{Q}\right)\right)A(z)dz=0,\\
&\al(N^0-N(L))+\il_0^L c_n\left(p_nm-rf(N(z))g(C(z))\right)A(z)dz=0,\\
&-\bt(C(0)-C^0)+\il_0^L\left(\f{p_cl}{Q}-rf(N(z))g(C(z))\right)A(z)dz=0,
\end{split}
\end{equation}
which imply that $N(L)<N^0$ and $C(0)<C^0$. Note that
\begin{align*}
&-D_nN''(z)+\left(c_nrAg(C)\il_0^1f'(sN)ds\right)N=c_np_nmA\geq0,~0<z<L,\\
&-D_cC''(z)+\left(rAf(N)\il_0^1g'(sC)ds\right)C=\f{p_cl}{Q}A\geq0,~0<z<L,
\end{align*}
with the boundary conditions
$$
N'(0)=0,~D_nN'(L)=\al(N^0-N(L))>0,~-D_cC'(0)=\bt(C^0-C(0))>0,~C'(L)=0.
$$
From the strong maximum principle, we have $N>0$ and $C>0$ on $[0,L]$.

It follows from \eqref{eqconganan} that
\[
\il_0^LA(z)dz< \f{\al Q N^0}{c_nl}.
\]
For any $z\in[0,L]$, we obtain
\begin{align*}
|D_nN'(z)|&=\left|D_n\il_0^zN''(x)dx\right|=
\left|c_n\il_0^z\left(rf(N)g(C)-p_nm\right)A(x)dx\right|\\
&\leq \f{\al(r+p_nm^*) Q N^0}{l},
\end{align*}
and
\begin{align*}
|N(z)|&=|N(L)+N(z)-N(L)|\leq |N(L)|+|N(z)-N(L)|\\
&\leq N^0+\left| \il_z^LN'(x)dx \right|\leq N^0+\f{\al(r+p_nm^*) Q N^0}{D_nl}(L-z)\\
&< N^0+\f{\al (r+p_nm^*)L Q N^0}{D_nl}=h_1.
\end{align*}
On the other hand, for any $z\in[0,L]$, we get
\begin{align*}
|D_cC'(z)|&=\left|D_c\il_z^LC''(x)dx\right|=
\left|\il_z^L\left(rf(N)g(C)-\f{p_cl}{Q}\right)A(x)dx\right|\\
&\leq \f{\al(rQ+p_cl)QN^0}{c_nlQ},
\end{align*}
and
\begin{align*}
|C(z)|&=|C(0)+C(z)-C(0)|\leq |C(0)|+|C(z)-C(0)|\\
&\leq C^0+\left| \il_0^zC'(x)dx \right|\leq C^0+\f{\al(rQ+p_cl)QN^0}{D_cc_nlQ}(z-0)\\
&< C^0+\f{\al(rQ+p_cl)LQN^0}{D_cc_nlQ}=h_2.
\end{align*}

(iii) We now establish the boundedness of $A(z)$ for $m\in (0,m^*)$.
If it is not true, then we assume that
there are a sequence $m_i\in(0,m^*)$
and corresponding positive solutions $(A_i(z),N_i(z),C_i(z))$
of \eqref{steadystateequation} such that $\|A_i\|_\iy\ra\iy$ as
$i\ra\iy$. Without loss of generality, we may assume that
$m_{i}\ra m_0\in(0,m^*)$ as $i\ra\iy$.
From (ii), we obtain $0<f(N_i(z))g(C_i(z))<
f(h_1)g(h_2)$ for all $i$ and $z\in[0,L]$, and hence we also may assume
$f(N_i(z))g(C_i(z))\ra \zeta(z)$ weakly in $L^2(0,L)$ for some function
$\zeta(z)$ as $i\ra\iy$. Let $a_i=A_i/\|A_i\|_\iy$.
From the first equation of \eqref{steadystateequation}, we have
\begin{equation*}
\begin{cases}
-(D_aa'_i(z)-sa_i(z))'+\left(m_i+\f{l}{Q}\right)a_i(z)=rf(N_i(z))g(C_i(z))a_i(z),
~0<z<L,\\
D_aa'_i(0)-sa_i(0)=D_aa'_i(L)-sa_i(L)=0,\\
\il_0^{L}\left(rf(N_i(z))g(C_i(z))-\left(m_i+\f{l}{Q}\right)\right)a_i(z)dz=0.
\end{cases}
\end{equation*}
By using $L^p$ theory for elliptic operators and
the Sobolev embedding theorem,  we may assume (passing to a
subsequence if necessary) that $a_i\ra \xi$ in $C^1([0,L])$ as
$i\ra\iy$, and $\xi$ satisfies (in the weak sense)
\begin{equation}\label{eq234fsdfaaa}
\begin{cases}
-(D_a\xi'(z)-s\xi(z))'+\left(m_0+\f{l}{Q}\right)\xi(z)=r\zeta(z)\xi(z),~0<z<L,\\
D_a\xi'(0)-s\xi(0)=D_a\xi'(L)-s\xi(L)=0,\\
\il_0^{L}\left(r\zeta(z)-\left(m_0+\f{l}{Q}\right)\right)\xi(z)dz=0.
\end{cases}
\end{equation}
It follows from the strong
maximum principle that $\xi>0$ on $[0,L]$ since $\xi\geq0$ and $\|\xi\|_\iy=1$.
Note that
$N_i$ satisfies
\begin{equation}\label{eq2212334}
\begin{cases}
\f{D_nN_i^{''}(z)}{\|A_i\|_\iy}=c_u\left(rf(N_i(z))g(C_i(z))-p_nm_i\right)a_i(z),~0<z<L,\\
N'_i(0)=0,~D_nN'_i(L)=\al(N^0-N(L)).
\end{cases}
\end{equation}
Integrating \eqref{eq2212334} from $0$ to $L$, we have
$$
\f{\al(N^0-N(L))}{\|A_i\|_\iy}=c_n\il_0^L\left(rf(N_i(z))g(C_i(z))-p_nm_i\right)a_i(z)dz
$$
Letting $i\ra\iy$ gives
$$
0=c_n\il_0^L\left(r\zeta(z)-p_nm_0\right)\xi(z)dz=
c_n\il_0^L\left((1-p_n)m_0+\f{l}{Q}\right)\xi(z)dz>0,
$$
which is a contradiction. Hence (iii) holds.
\end{proof}

We have the following global bifurcation theorem of the branch of the coexistence steady state solution $E_2$.
\begin{theorem}\label{thancglobal}
If \eqref{mpoint} holds,  then there exists a connected component $\Upsilon^+$ of the set $\Upsilon$ of positive solutions to \eqref{steadystateequation} such that
\begin{itemize}
\item [\tu{(i)}] the closure of $\Upsilon^+$ includes the bifurcation point   $(m_*,0,N^0,C^0)$;
\item [\tu{(ii)}] the projection of $\Upsilon^+$ onto $\R^+$ via $(m,A_2(m,z),N_2(m,z),C_2(m,z))\ra m$ contains the interval $(0,m_*)$. In particular, \eqref{steadystateequation} has at least one positive solution for any $m\in (0,m_*)$.
\end{itemize}
\end{theorem}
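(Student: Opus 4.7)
The plan is to combine the local bifurcation of Theorem \ref{theoremlocal} with the global bifurcation theorem of Shi and Wang \cite[Theorem 3.3 and Remark 3.4]{ShiWang2009}, using the a priori estimates of Lemma \ref{lepea} to control the resulting continuum. All algebraic hypotheses needed by Shi--Wang (Fredholm index zero of $H$, one-dimensional kernel, transversality $F_{m,(A,N,C)}(m_*,0,N^0,C^0)[\bar\vp,\bar\psi,\bar\phi]\notin \ran H$) were already verified inside the proof of Theorem \ref{theoremlocal}(i). Applying their theorem to the local curve $\Gamma_1$ produces a maximal connected component $\Upsilon^+$ of the positive-solution set $\Upsilon$ whose closure contains $(m_*,0,N^0,C^0)$, which gives (i), and that satisfies the standard Rabinowitz-type alternative: either $\Upsilon^+$ is unbounded in $\R\times X_1\times X_2\times X_3$, or its closure meets the semi-trivial line $\Gamma_0$ at a second point $(\hat m,0,N^0,C^0)$.

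For (ii) I would first localise where this unboundedness can occur. Lemma \ref{lepea}(i) confines $\pi(\Upsilon^+)\subset(0,m^*)$; Lemma \ref{lepea}(ii) bounds $N$ and $C$ uniformly by $h_1,h_2$; and the contradiction argument behind Lemma \ref{lepea}(iii) gives uniform bounds on $\|A\|_\iy$ over every compact $m$-subinterval of $(0,m^*)$. Standard elliptic $L^p$ estimates upgrade these $L^\iy$ bounds to uniform $C^2$ bounds on such subintervals, so if $\pi(\Upsilon^+)\subset[\ve,m^*-\ve]$ for some $\ve>0$, then $\Upsilon^+$ would be bounded in $\R\times X_1\times X_2\times X_3$ and neither horn of the alternative could hold. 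Consequently $\pi(\Upsilon^+)$ must accumulate at one of the endpoints $0$ or $m^*$, or its closure must meet $\Gamma_0$ at a second value of $m$; the latter is ruled out for every positive $\hat m\neq m_*$ by the uniqueness statement Theorem \ref{theoremlocal}(ii).

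The key remaining step is to exclude the limit $m\to m^*$. Taking $(m_i,A_i,N_i,C_i)\in \Upsilon^+$ with $m_i\to m^*$ and integrating the first equation of \eqref{steadystateequation} yields
\[
\il_0^L\bigl(rf(N_i)g(C_i)-(m_i+l/Q)\bigr)A_i\,dz=0.
\]
Since $m_i+l/Q\to r$ while $f(N_i)g(C_i)\le f(h_1)g(h_2)<1$ stays uniformly below $1$, the integrand is eventually a strictly negative multiple of $A_i$, forcing $A_i\to 0$ in $L^1$, hence in $C([0,L])$ by the $C^2$ bounds. The $N$- and $C$-equations then degenerate to the harmonic boundary-value problems solved by $N^0$ and $C^0$, so the limit point $(m^*,0,N^0,C^0)$ lies on $\Gamma_0$; applying Theorem \ref{theoremlocal}(ii) to this limit forces $m_*=m^*$, contradicting the definition $m_*=rf(N^0)g(C^0)-l/Q<r-l/Q=m^*$. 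Therefore $\pi(\Upsilon^+)$ accumulates only at $m=0$; being the continuous image of a connected set it is an interval in $\R^+$ whose closure contains both $0$ and $m_*$, and this forces $(0,m_*)\subset \pi(\Upsilon^+)$, proving (ii). The main obstacle I anticipate is precisely this exclusion of $m\to m^*$, which relies on the uniform bound $f(N)g(C)<f(h_1)g(h_2)<1$ from Lemma \ref{lepea}(ii) together with the rigidity of the bifurcation value from Theorem \ref{theoremlocal}(ii); once it is in hand, the connectedness conclusion is routine.
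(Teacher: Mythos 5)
Your overall strategy is the same as the paper's: apply the Shi--Wang global bifurcation theorem to the local curve $\Gamma_1$ from Theorem \ref{theoremlocal}, then use the a priori estimates of Lemma \ref{lepea} to pin down which alternative occurs. There is, however, one genuine gap: you state the conclusion of \cite[Theorem 3.3, Remark 3.4]{ShiWang2009} as a two-way Rabinowitz dichotomy (non-compact, or returns to $\Gamma_0$ at a second bifurcation point), but the unilateral version you are invoking for the component $\Upsilon^+$ of \emph{positive} solutions has a third alternative that must be excluded: $\Upsilon^+$ may contain a point $(m,\hat{A},N^0+\hat{N},C^0+\hat{C})$ with $0\neq(\hat{A},\hat{N},\hat{C})\in Z$, where $Z$ is a closed complement of $\ker H=\spa\{(\bar{\vp},\bar{\psi},\bar{\phi})\}$. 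The classical two-alternative statement applies to the full Rabinowitz continuum, which need not consist of positive solutions, so you cannot simply drop this case. The paper rules it out by observing that membership in $Z$ forces the orthogonality $\il_0^{L}\hat{A}(z)\bar{\vp}(z)\,dz=0$, while Lemma \ref{lepea} (via the strong maximum principle) gives $\hat{A}>0$ and $\bar{\vp}>0$ on $[0,L]$, a contradiction. Without this step you have not established that the non-compactness alternative is the one that occurs, and the rest of your argument for (ii) has nothing to stand on.

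Apart from that omission, your treatment of part (ii) is sound and in fact more explicit than the paper's. The paper disposes of the endpoint $m\to m^*$ implicitly: a degeneration of the branch onto $\Gamma_0$ at any $\hat{m}$ falls under alternative (2), which is excluded by the uniqueness of the bifurcation value in Theorem \ref{theoremlocal}(ii), and the projection argument is then stated in one line using $m_*<m^*$. Your integral identity
$\il_0^L\bigl(rf(N_i)g(C_i)-(m_i+l/Q)\bigr)A_i\,dz=0$
combined with the uniform bound $f(N_i)g(C_i)\leq f(h_1)g(h_2)<1$ actually shows directly that positive solutions cannot exist for $m$ near $m^*$ (the integrand would be strictly negative against $A_i>0$), which is a clean, self-contained way to confine the accumulation of $\pi(\Upsilon^+)$ to $m=0$; note only that the correct conclusion of that computation is an outright contradiction with positivity of $A_i$, not merely $A_i\to 0$. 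Once the third alternative above is addressed, your proof is complete and follows essentially the paper's route.
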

\begin{proof}
It is easy to see that the conditions of Theorem 3.3 and Remark 3.4 in \cite{ShiWang2009}  hold by using standard ways (see \cite{ShiWang2009,ShiZhangZhang2019,WangXu2013}). This means that there exists a connected component $\Upsilon^+$ of
$\Upsilon$ containing $\Gamma_1=\{(m(\tau),A_2(\tau,z),N_2(\tau,z),C_2(\tau,z)):0<\tau<\da\}$. Moreover,  the closure of $\Upsilon^+$ includes the bifurcation point  $(m_*,0,N^0,C^0)$ and $\Upsilon^+$ satisfies one of the following three alternatives:
\begin{itemize}
\item [(1)] it is not compact in $\R\times X_1\times X_2\times X_3$;
\item [(2)] it contains another bifurcation point $(\hat{m},0,N^0,C^0)$, where $\hat{m}$ is another bifurcation value satisfying \eqref{eqker} with $\hat{m}\neq m_*$;
\item [(3)] it contains a point $(m,\hat{A}(z),N^0+\hat{N}(z),
    C^0+\hat{C}(z))$, where $0\neq(\hat{A}(z),\hat{N}(z),\hat{C}(z))\in Z$,
    $Z$ is a closed complement of $\ker L=\spa\{(\bar{\vp},\bar{\psi},\bar{\phi})\}$ in $X_1\times X_2\times X_3$.
\end{itemize}

It follows from the uniqueness of $m_*$ in Theorem \ref{theoremlocal} that case (2) cannot
happen. If case (3) holds, then
\begin{equation}\label{eqsunxue}
\il_0^{L}\hat{A}(z)\bar{\vp}(z)dz=0.
\end{equation}
From Lemma \ref{lepea}, we have $\hat{A}(z)>0$ on $[0,L]$, which is a contradiction to \eqref{eqsunxue}. Hence case
(1) must occur for $\Upsilon^+$.

From case (1), $\Upsilon^+$ is unbounded in $\R\times X_1\times X_2\times X_3$ .
By Lemma \ref{lepea}, if $m\in(0,m^*)$, then $(A_2(z),N_2(z),C_2(z))$ is bounded.
Therefore, the projection of $\Upsilon^+$ onto the $m$-axis contains the interval $(0,m_*)$ since $m_*<m^*$.
\end{proof}

Theorem \ref{thancglobal} shows that when $0<m<m_*$, \eqref{fullequation} always has a positive steady state solution $E_2$ while the nutrient-inorganic carbon-only semi-trivial steady state $E_1$ is unstable from Theorem \ref{thNClocal}. It is not known whether the positive steady state solution $E_2$ is unique although it appears to be from numerical simulations. In \cite{ZhangShiChang2018}, the uniqueness of positive steady state was shown for the case that algae only depends on nutrient.

\subsection{Simulations}

In this subsection, some numerical simulations are shown to illustrate
our analysis of steady states for system \eqref{fullequation}. Here
the set of parameter values we use is mainly from \cite{Jager2014,Nie2016,Van2011,Vasconcelos2016}. The values of all biologically reasonable parameters are listed in Table \ref{tabletparametersvalue}.

\begin{table}[th]
{ \begin{center} \caption[]{ Numerical values of
parameters of model \eqref{fullequation} with references.
}\label{tabletparametersvalue}\scriptsize
\begin{tabular}{p{1.2cm}p{1.4cm}p{1.6cm}p{1.9cm}p{1.2cm}p{1.4cm}p{1.6cm}p{1.9cm}}\hline
{\bf Symbol} &{\bf Values}&{\bf  Units}&{\bf
Source}&{\bf Symbol}
&{\bf Values}&{\bf  Units}&{\bf Source}\\
\hline $D_a,D_n,D_c$&0.02 (0.001-10)& m$^2$/day&
\cite{Huismanv2002,Huismanv2006,Jager2010,Klausmeier2001,Ryabov2010}&
$s$& 0.02(-0.5-0.5) &m/day&\cite{Huismanv2002,Huismanv2006,Jager2014,Jager2010,Klausmeier2001,Ryabov2010,Vasconcelos2016}\\
$r$&1.5 &day$^{-1}$&\cite{Vasconcelos2016}&
$m$& 0.1&day$^{-1}$&\cite{Jager2014,Vasconcelos2016}\\
$l$&50$\times$10$^{-9}$ &$\mu$molC cell$^{-1}$/day &\cite{Nie2016}&
$Q$&2500$\times$10$^{-9}$ &$\mu$molC cell$^{-1}$&\cite{Nie2016}\\
$\ga_n$&3
&mgP/m$^3$&\cite{Jager2014,Vasconcelos2016}&$\ga_c$& 500
&$\mu$molC/m$^3$&\cite{Van2011}\\
$c_n$&0.008&mgP/$\mu$molC &\cite{Jager2014,Vasconcelos2016}&
$p_n$&0.05 (0-1)
&--&Assumption\\
$p_c$&0.05(0-1)
&--&Assumption&$\al$&0.05
&m/day&\cite{Jager2014,Vasconcelos2016}\\
$\bt$&0.264&m/day&\cite{Van2011}&$N^0$&50 (5-400)
& mgP/m$^3$&\cite{Jager2014}\\
$C^0$&6250 (500-15000)&$\mu$molC/m$^3$&\cite{Van2011}&$L$&4
&m&Assumption\\
\hline
\end{tabular}
\end{center}}
\end{table}

\begin{figure}[htb!]
\centerline{\includegraphics[scale=0.7]{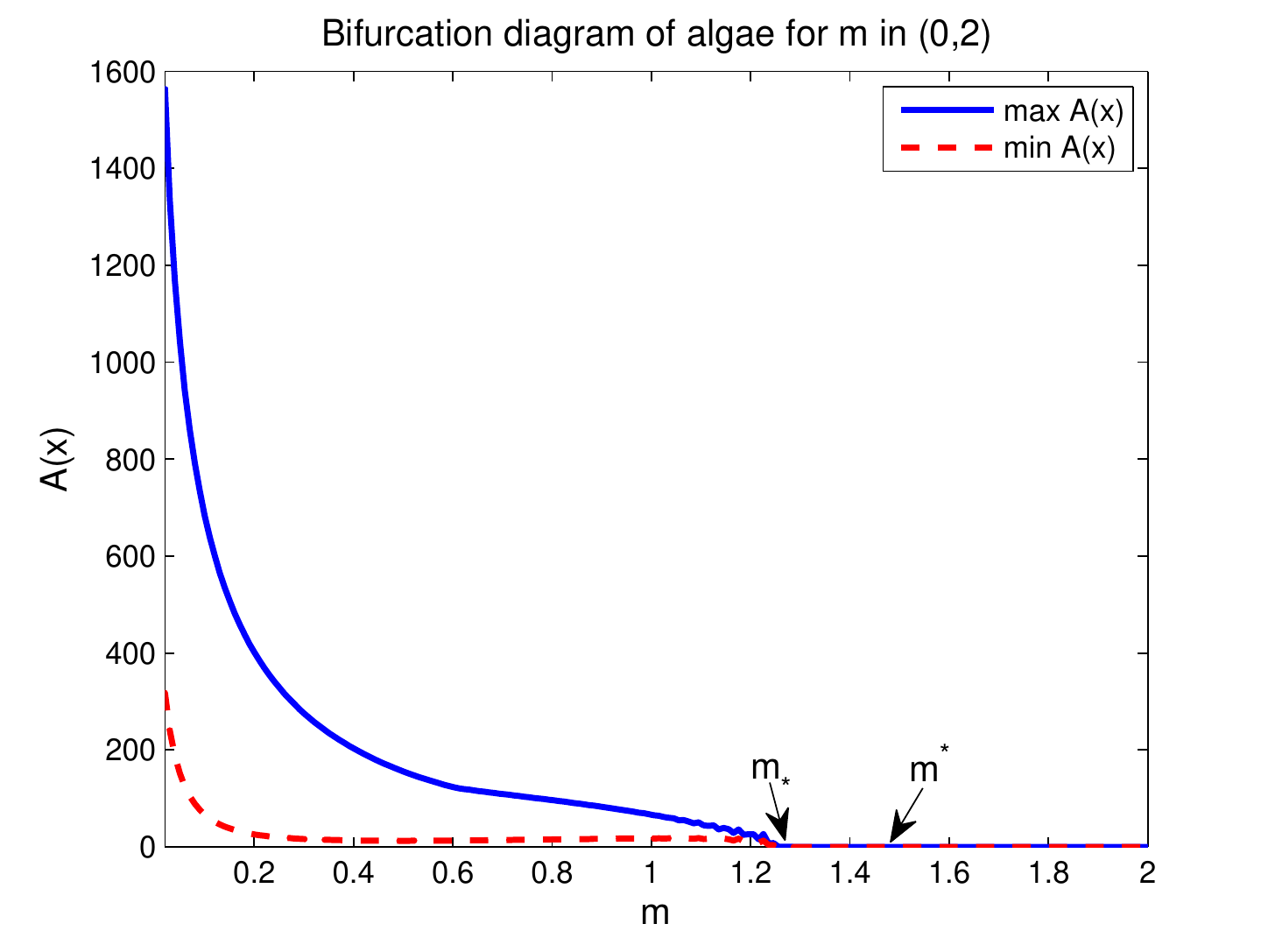}}
\caption{{\footnotesize Numerical bifurcation diagram of steady state algae density in \eqref{steadystateequation} for $m\in(0,2)$.  Here $D_a=D_n=D_c=0.1$ and other parameters except $m$ are from Table \ref{tabletparametersvalue}, and bifurcation values are $m^*=1.48$ and $m_*=1.29$.}}\label{parameterm}
\end{figure}

\begin{figure}[htb!]
\centerline{\includegraphics[scale=0.5]{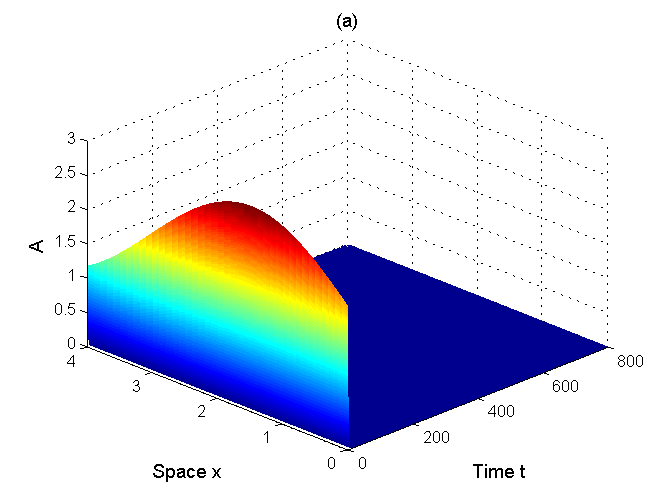} \includegraphics[scale=0.5]{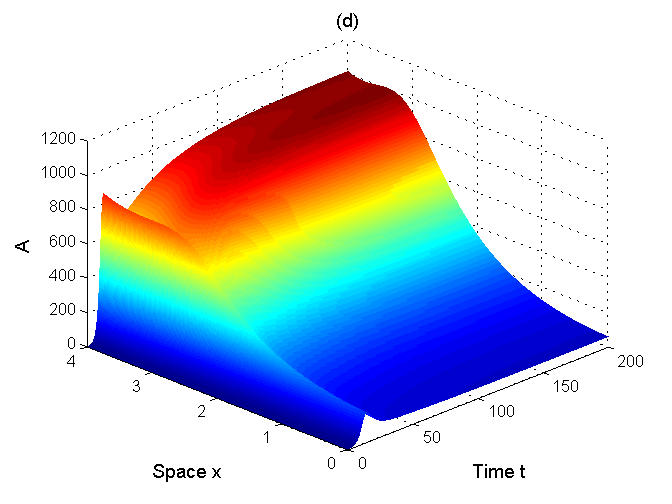}}
\centerline{\includegraphics[scale=0.5]{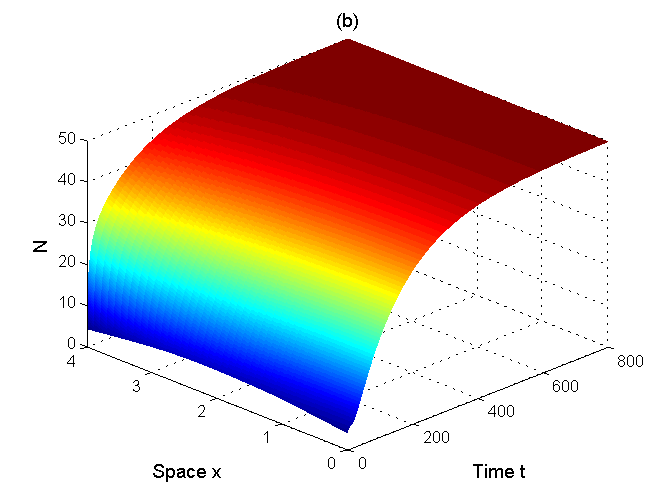} \includegraphics[scale=0.5]{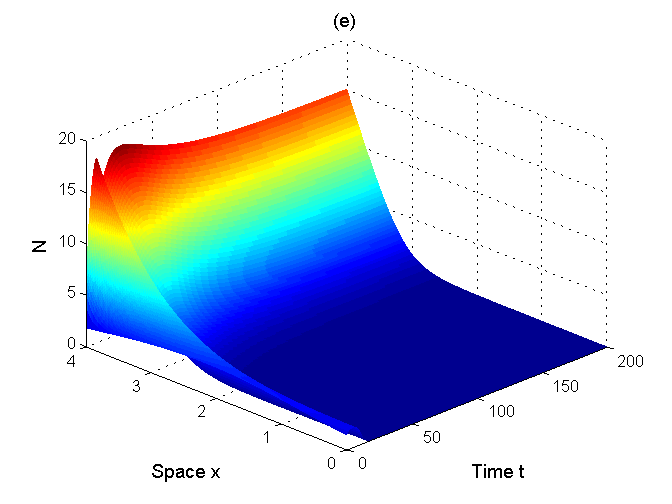}}
\centerline{\includegraphics[scale=0.5]{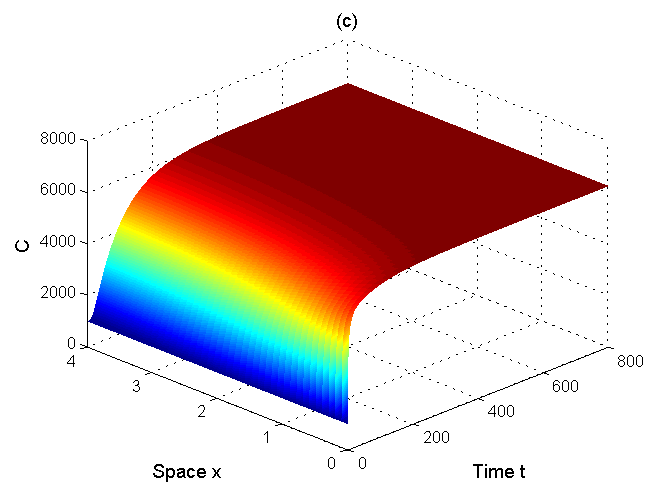} \includegraphics[scale=0.5]{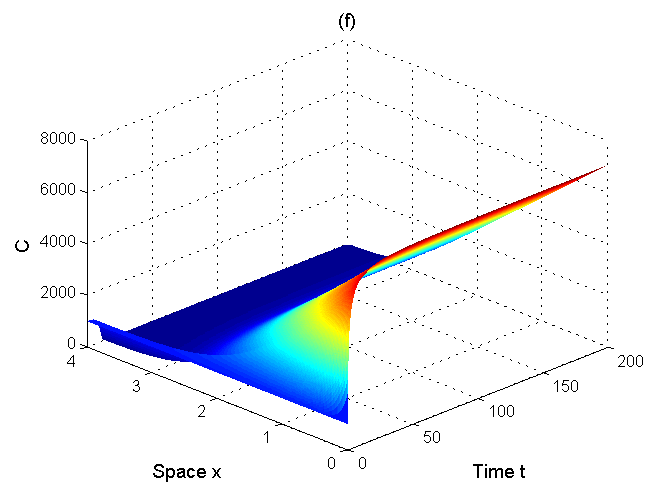}}
\caption{{\footnotesize  Left: the solution converges to the nutrients-inorganic carbon-only semi-trivial steady state $E_1$ in  (a)-(c) with $m=1.4$; Right: the solution converges to a coexistence steady state $E_2$ in Figure (d)-(f) with $m=0.1$. Here $D_a=D_n=D_c=0.1$ and other parameters except $m$ are from Table \ref{tabletparametersvalue}. }}\label{steaystate}
\end{figure}

According to Theorems \ref{thNClocal}, \ref{thNCglobal}, \ref{theoremlocal} and
\ref{thancglobal}, we can divide the parameter regime for $m$ according to two threshold values:
 $m_*=rf(N^0)g(C^0)-l/Q$ and $m^*=r-l/Q$.
$E_1$ always exists for all $m>0$, and  the stability of $E_1$ has the following three cases: (1) globally asymptotically stable when $m>m^*$; (2) locally asymptotically stable when $m_*<m<m^*$; (3) unstable when $0<m<m_*$. The bifurcation point $m=m_*$ is a threshold value  for the regime shift from extinction to survival of algae, such that the positive steady state solution $E_2$ exists when $m<m_*$ (see Figure \ref{parameterm}). The numerical bifurcation diagram in Figure \ref{parameterm} shows that the positive steady state $E_2$ decreases in $m$.

In Figure \ref{steaystate},  dynamic numerical  simulations of solutions of \eqref{fullequation} for parameter values from Table \ref{tabletparametersvalue} are shown.
For different algal loss
rates $m$, the solution of \eqref{fullequation} converges to different steady states regardless of initial conditions. 
For the case of $m=1.4>m_*$,  algae becomes extinct and the concentration of dissolved
nutrients and dissolved inorganic carbon reach the concentration of dissolved nutrients at the bottom of the water column and the concentration of dissolved inorganic carbon at the surface of the water column respectively (see Theorems \ref{thNClocal}, \ref{thNCglobal} and Figure \ref{steaystate} (a)-(c)).  For the case of $m=0.1<m_*$, algae, nutrients and inorganic carbon can coexist together at a positive level (see Theorems \ref{theoremlocal}, \ref{thancglobal} and Figure \ref{steaystate} (d)-(f)), and algae exhibit vertical aggregation phenomena (see Figure \ref{steaystate} (d)).

\section{The vertical distribution of algae}\label{sectionvertical}
\noindent

The vertical distribution of algae in the aquatic ecosystem is highly heterogeneous 
and affects the whole aquatic ecosystem in terms of primary productivity, trophic levels and cycle of matter. Especially, algae can exhibit the most prominent vertical aggregation phenomena, where the equivalent of 90\% of algal biomass sometimes gathers in a relatively thin layer \cite{Klausmeier2001,Ryabov2010,Yoshiyama2009,Yoshiyama2002}. In the present section, we will explore the influence of environmental parameters in \eqref{fullequation} on the vertical distribution of algae under the jointly limitation of nutrients and inorganic carbon when the solution of \eqref{fullequation} appears to converge to a coexistence steady state $E_2$.
In figures below, we compare the variation of the vertical distribution of coexistence steady state $A(z)$ for different parameter values.

\begin{figure}[htb!]
\centerline{\includegraphics[scale=0.5]{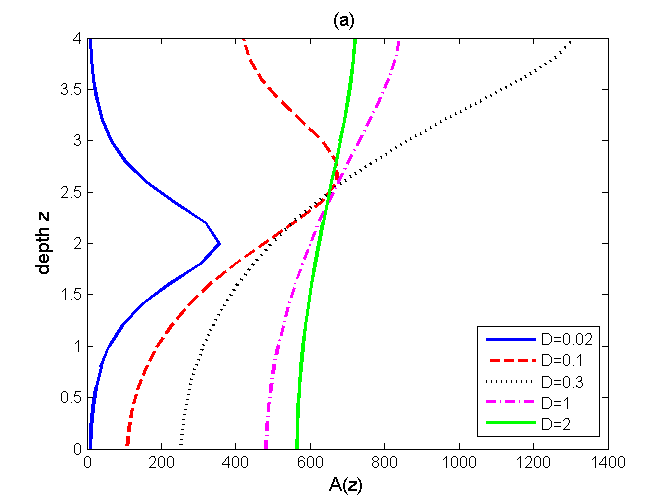}
\includegraphics[scale=0.5]{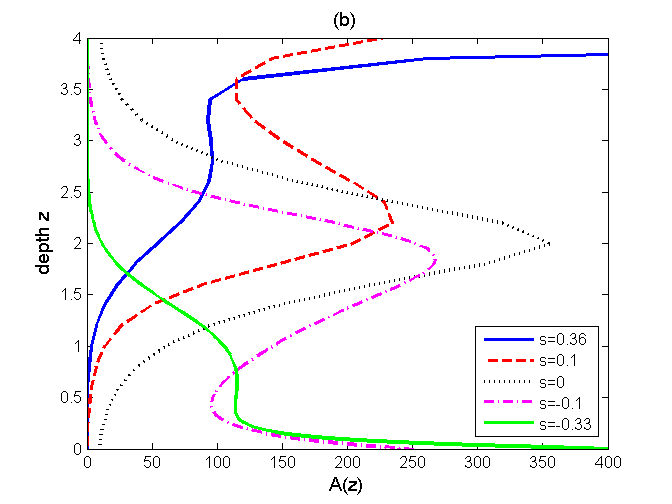}}
\caption{{\footnotesize  Vertical distribution of the coexistence steady state
$A(z)$ for varying $D=D_a=D_n=D_c$ (left) and $s$ (right). Here other parameters are from Table \ref{tabletparametersvalue}. }}\label{distribution-ds}
\end{figure}

We first explore the effect of diffusion coefficients $D_a,D_n,D_c$ and advection rate $s$ on the vertical distribution of algae. It has been proved that vertical turbulent diffusivity in lake and oceans  has obvious difference when the season changes \cite{Wuest2003}. It is generally true that turbulent diffusivity is small in summer, but large in winter. In order to facilitate the discussion, we let $D_a=D_n=D_c=D$. From Figure \ref{distribution-ds} (a), one can see that the spatial heterogeneity of algal biomass gradually changes from aggregation to even distribution when $D$ increases. This is because the increase of turbulent diffusivity leads to the full transfer of nutrients and organic carbon in the water column, so that every part of the water column is well adapted to the growth of algae. This confirms once again that algae are prone to aggregation in summer and not in winter because of turbulent diffusion. Therefore, to study the vertical distribution of algae better, we will consider only a poorly mixed water column and let $D_a=D_n=D_c=0.02$ in the remaining part of this section.

Suspension ($s=0$), subsidence ($s>0$) and floating upward ($s<0$) of algae are not only important ways to obtain the best growth opportunities, but also greatly affect the vertical accumulation of algae. From  Figure \ref{distribution-ds} (b), one can observe that the vertical distribution of algae changes greatly with the decrease of $s$ in a poorly mixed water column. For the case of $s=0.36$ and $s=-0.33$, the algal biomass is mainly concentrated at the bottom or the surface of the water column due to the larger sinking or buoyant velocity. When $s=0$, the maximum of the algal biomass, also described as deep chlorophyll maxima (DCMs), arises from the middle of the water column. This is because the nutrients are concentrated at the bottom, while inorganic carbon is concentrated at the surface for a lower
turbulent diffusion,  which leads to the optimal growth of algae in the middle of the water column. This indicates that the shift from subsidence to floating upward can cause algal blooms and bring out a large amount of algae gathering on the water surface.

\begin{figure}[b!]
\centerline{\includegraphics[scale=0.5]{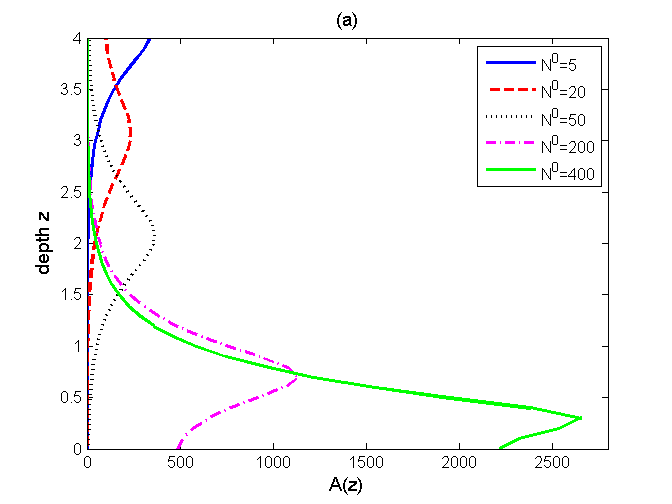}
\includegraphics[scale=0.5]{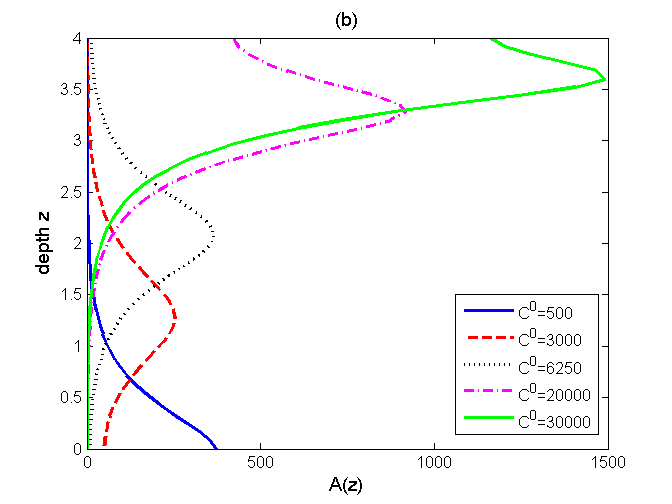}}
\caption{{\footnotesize  Vertical distribution of the coexistence steady state
$A(z)$ for varying $N^0$ (left) and $C^0$ (right). Here other parameters are from Table \ref{tabletparametersvalue}. }}\label{distribution-nc}
\end{figure}

Here an interesting phenomenon is the stratification of algae for the smaller sinking or buoyant velocity in Figure \ref{distribution-ds} (b). For $s=0.1$, the algal biomass is concentrated in the middle and bottom of the water column and are mainly divided into two layers: 2m-3m and 3.7m-4m. Similar phenomena also occurs for $s=-0.1$, where the algal biomass is concentrated in the middle and surface of the water column. In both cases, there are two local maximums of algae biomass density. This phenomenon shows that there may be one, two or even more layers for  the optimal growth of algae in a water column.

In our model \eqref{fullequation}, the algal growth is limited by organic carbon and nutrients. Nutrients from the water bottom and inorganic carbon from the water surface form an asymmetric resource supply mechanism on the algal growth. An increasing concentration $N^0$ of dissolved nutrients at the bottom reduces the dependence of the algal growth on nutrients in the whole water column, such that the algal biomass gradually increases and the aggregation location shifts from the bottom to the surface (see Figure \ref{distribution-nc} (a)). When the nutrient level on the water surface is very high, algae accumulate on the water surface and algal blooms occur (see $N^0=400$ in Figure \ref{distribution-nc} (a)). Hence eutrophication of water body caused by warm conditions, industrial waste water and sewage is an important factor of algal blooms. On the other hand, increasing organic carbon causes the algal biomass vertical aggregation transfering from the surface to the bottom (see Figure \ref{distribution-nc} (b)). This implies that the input of organic carbon can change the vertical distribution of algae, thus reduce the risk of algal blooms.

\begin{figure}[htb!]
\centerline{\includegraphics[scale=0.5]{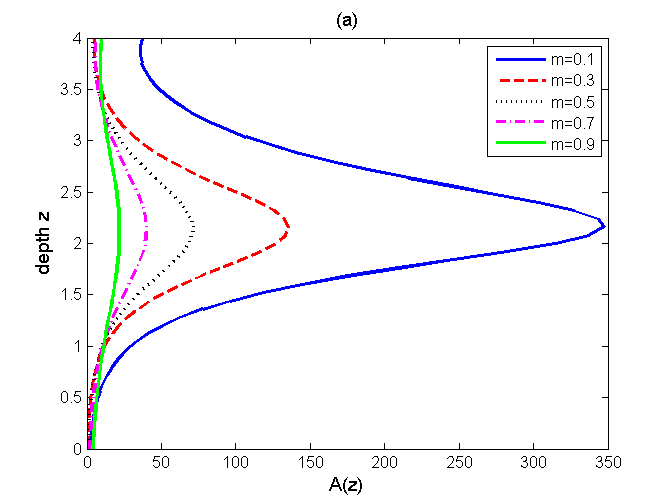}
\includegraphics[scale=0.5]{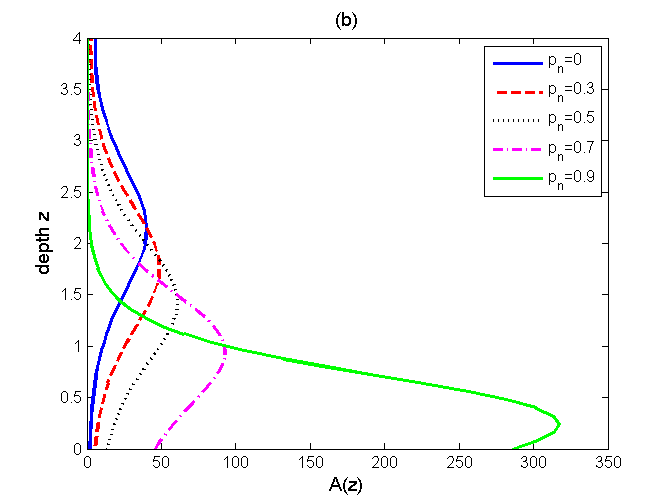}}
\caption{{\footnotesize  Vertical distribution of the coexistence steady state
$A(z)$ for varying $m$ (left) and $p_n$ (right). Here $p_n=0$ in (a), $m=0.7$ in (b)  and other parameters are from Table \ref{tabletparametersvalue}. }}\label{distribution-mpn}
\end{figure}

\begin{figure}[htb!]
\centerline{\includegraphics[scale=0.5]{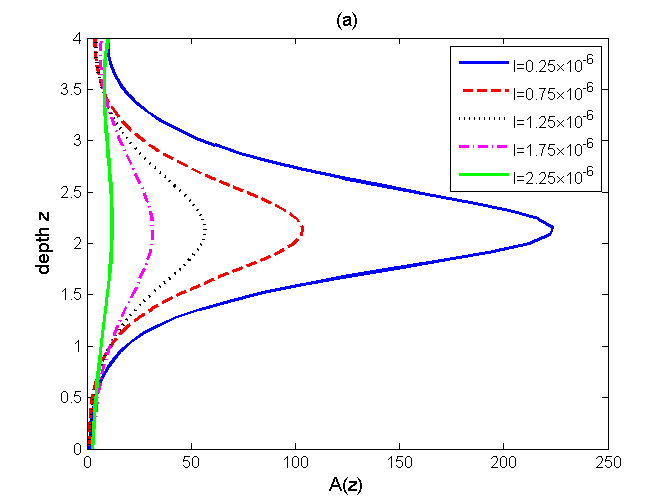}
\includegraphics[scale=0.5]{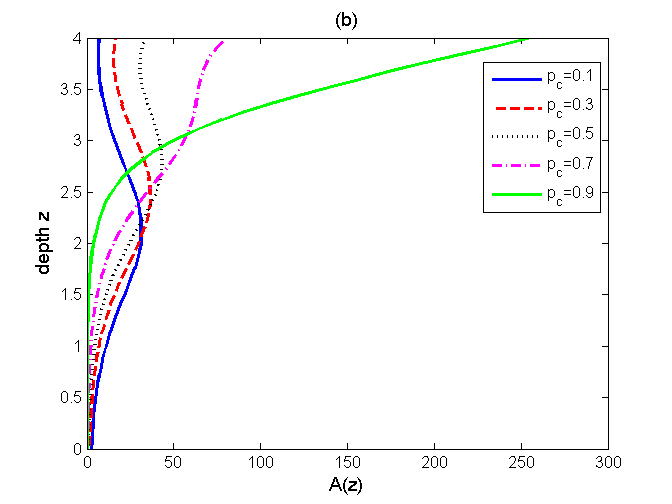}}
\caption{{\footnotesize  Vertical distribution of the coexistence steady state
$A(z)$ for varying $l$ (left) and $p_c$ (right). Here $p_c=0$ in (a), $l=1.75\times10^{-6}$ in (b) and other parameters are from Table \ref{tabletparametersvalue}. }}\label{distribution-lpc}
\end{figure}

The loss rate $m$ of algae only changes the total biomass of algae, but has no significant impact on the vertical distribution of algae (see Figure \ref{distribution-mpn} (a)).  Similarly, the respiration rate $l$ of algae also has no essential effect on the vertical distribution of algae (see Figure \ref{distribution-lpc} (a)).  From Figure \ref{distribution-mpn} (b) and Figure \ref{distribution-lpc} (b), one can observe that the nutrient recycling proportion $p_n$  and the organic carbon recycling proportion $p_c$ can both affect the vertical distribution of algae. As is often observed, a large amount of algae float on the surface during the daytime and sink to the bottom at night. The most important reason is the phototaxis of algae. Our results suggest that the high nutrient recycling proportion from loss of algal biomass is also a factor for the algal concentration on the surface during the day (see $p_n=0.9$ in Figure \ref{distribution-mpn} (b)). This happens because rapid decomposition of dead algae under high daytime temperature produces enough nutrients. In contrast, at night, the algal respiration releases a large amount of organic carbon, so that algae gather at the bottom (see $p_c=0.9$ in Figure \ref{distribution-lpc} (b)).

 Let $z_{\max}$ be the depth coordinate of the maximum of the algal biomass in the water column when the vertical aggregation occurs. In view of Figures \ref{distribution-ds}-\ref{distribution-lpc}, $z_{\max}$ describes the change of algal aggregation layer. The increase of $z_{\max}$ indicates that algae aggregate to the deep layer, while the decrease of $z_{max}$ indicates that algae aggregate to the surface layer. Let $A^*$ be the algal biomass on the water surface and $A_*$ be the algal biomass on the water bottom. Hence $A^*$ and $A_*$ are both important indices to measure algal blooms and the change of water quality. As a summary of the above discussion, the influence of environmental parameters
on $z_{\max}$, $A^*$ and $A_*$ are listed in Table \ref{tableparametersaggregation}.
Here we do not list the effects of turbulent diffusion. This is because with the increase of diffusion, the aggregation layer gradually disappears, and algae tend to be more evenly distributed.

\begin{table}[h]
{ \begin{center} \caption[]{ The influence of environmental parameters on
$z_{\max}$, $A^*$ and $A_*$.
}\label{tableparametersaggregation}
\begin{tabular}{p{2.8cm}p{2cm}p{2cm}p{2cm}}
\hline
{\bf Parameters} &{\bf $z_{\max}$}&{\bf  $A^*$}&{\bf
$A_*$}\\
\hline
$s$ $\uparrow$& $\uparrow$&$\downarrow$&$\uparrow$\\
$N^0$ $\uparrow$& $\downarrow$&$\uparrow$&$\downarrow$\\
$C^0$ $\uparrow$& $\uparrow$&$\downarrow$&$\uparrow$\\
$m$ $\uparrow$& ---&$\uparrow$&$\uparrow$\\
$p_n$ $\uparrow$& $\downarrow$&$\uparrow$&$\downarrow$\\
$l$ $\uparrow$& ---&$\uparrow$&$\uparrow$\\
$p_c$ $\uparrow$& $\uparrow$&$\downarrow$&$\uparrow$\\
\hline
\end{tabular}
$$\uparrow: \hbox{Increasing}~~~\downarrow:\hbox{
Decreasing}~~~\hbox{---: No significant effect}$$
\end{center}}
\end{table}

\section{Discussion}\label{sectiondiscussion}
\noindent

In this paper, we establish a reaction-diffusion-advection model \eqref{fullequation} to describe the dynamic interactions among algae, nutrients and inorganic carbon in a water column. The threshold conditions for the regime shift from extinction to existence of
algae is rigorously derived. By using theory of partial differential equations, dynamical systems, and  bifurcation theory, we establish theoretical results showing that algae beocme extinct when the loss rate $m$ exceeds the threshold value $m_*$ (see Theorems \ref{thNClocal} and \ref{thNCglobal}), while algae invade and persist when $m$ is below the threshold value $m_*$ (see Theorems \ref{theoremlocal} and \ref{thancglobal}).

The algal growth depends on limited nutrients from the water bottom and limited inorganic carbon from the water surface, which is an asymmetric supply of resources.
Our studies suggest that algae exhibit vertically spatial heterogeneity and vertical aggregation in the water column under the joint effect of nutrient and inorganic carbon supply (see Figures \ref{distribution-ds}-\ref{distribution-lpc}). Moreover, the environmental parameters of model \eqref{fullequation} could influence the algal vertical distribution (see Table \ref{tableparametersaggregation}).

From Figure \ref{distribution-ds} (a), one can see that vertical turbulent diffusion has an important effect on the vertically heterogeneous distribution of algae. If turbulent kinetic  energy is large enough, algae tend to distribute evenly throughout the water column, while small turbulent diffusion leads to aggregation of algae. The loss rate $m$ and the respiration rate $l$ only change the total biomass of algae, but have no effect on the location of algae aggregation layer (see Figures \ref{distribution-mpn} (a) and \ref{distribution-lpc} (a)). Parameters $s,N^0,C^0,p_n,p_c$ could affect the vertical distribution of algae, and are important indicators to measure the occurrence of algal blooms. Algal floating upward ($s<0$) causes algae to gather on the water surface, whereas the sinking of algae ($s>0$) causes algae to concentrate on the bottom (see Figure \ref{distribution-ds} (b)).  It is also observed that  there are algal vertical aggregation in multiple water layers for the intermediate advection rate (see $s=0.1, -0.1$  in Figure \ref{distribution-ds} (b)). In the nutrient-limited or inorganic carbon-limited water, the excessive input of nutrients is easy to cause algal blooms due to algal concentration on the water surface, while the input of organic carbon
reduces the risk of algal blooms as a result of algal aggregation on the water bottom (see Figure \ref{distribution-nc}). The recycling rates $p_n,p_c$ are both important indices to measure algal aggregation and blooms (see Figures \ref{distribution-mpn} (b) and \ref{distribution-lpc} (b)).

In previous studies, it has been confirmed that algae have vertical distribution and aggregation under the action of light and nutrients. In the present paper, our analysis shows that algae also exhibit complex spatial heterogeneity and vertical aggregation under the mechanism of asymmetric supply of nutrients and inorganic carbon. This study complements and further develops earlier studies of algae population growth in the water column. Based on the present discussion, it will be of interest to explore more biological questions. For example, two or more algae compete for nutrients and inorganic carbon;  the dynamics of algae population if all three resources (light, nutrients, inorganic carbon) are considered; the effect of toxic plankton species, zooplankton and fishes.

\bibliographystyle{plain}

\bibliography{algal-nutrients-carbin}

\end{document}